\newtheorem{thm}{Theorem}
\newtheorem{cor}{Corollary}
\newtheorem{prop}{Proposition}
\newtheorem{lem}{Lemma}
\newtheorem{rem}{Remark}
\newcommand{\sref}[1]{Section~\ref{#1}}
\newcommand{\tref}[1]{Table~\ref{#1}}
\newcommand{\fref}[1]{Figure~\ref{#1}}
\newcommand{\cref}[1]{Chapter~\ref{#1}}
\newcommand{\thmref}[1]{Theorem~\ref{#1}}
\def\sD{\mathscr{D}}
\def\sK{\mathscr{K}}
\def\sL{\mathscr{L}}
\def\bbE{\mathbb{E}}
\def\bbF{\mathbb{F}}
\def\bbN{\mathbb{N}}
\def\bbP{\mathbb{P}}
\def\bbR{\mathbb{R}}
\def\gd{\delta}
\def\ge{\epsilon}
\def\gl{\lambda}
\def\gq{\theta}
\def\gr{\rho}
\def\gD{\Delta}
\def\gL{\Lambda}
\def\bfmath#1{\boldsymbol{#1}}
\def\bfe{{\bfmath{e}}}
\def\bfi{{\bfmath{i}}}
\def\bfj{{\bfmath{j}}}
\def\bfk{{\bfmath{k}}}
\def\bfl{{\bfmath{l}}}
\def\bfn{{\bfmath{n}}}
\def\bfs{{\bfmath{s}}}
\def\bft{{\bfmath{t}}}
\def\bfv{{\bfmath{v}}}
\def\bfx{{\bfmath{x}}}
\def\bfy{{\bfmath{y}}}
\def\bfz{{\bfmath{z}}}
\def\bfI{\bfmath{I}}
\def\bfL{\bfmath{L}}
\def\bfN{\bfmath{N}}
\def\bfP{\bfmath{P}}
\def\bfQ{\bfmath{Q}}
\def\bfX{\bfmath{X}}
\def\bfzero{\bfmath{0}}
\def\bfep{\bfmath{\epsilon}}
\def\bfxi{\bfmath{\xi}}
\def\bfpi{\bfmath{\pi}}
\def\bfTheta{\bfmath{\Theta}}
\def\ds{\ensuremath{\displaystyle}}
\def\ud{\mathrm{d}}
\def\Q{S} 
\def\Al{\ensuremath{A_{\leq l}}}
\def\Ag{\ensuremath{A_{> l}}}
\def\Alx{\ensuremath{A_{\leq x}}}
\def\Agx{\ensuremath{A_{> x}}}
\def\Ll{\ensuremath{[L]_{\leq l}}}
\def\Lg{\ensuremath{[L]_{> l}}}
\newcommand{\m}[3]{#1_{-#2,#3}}
\DeclareMathOperator*{\bigtimes}{\scalerel*{\times}{\sum}}
\newcommand{\specialcell}[1]{\ifmeasuring@#1\else\omit$\displaystyle#1$\ignorespaces\fi} 
\journal{Theoretical Population Biology}
\begin{document}

\begin{frontmatter}



\title{A coalescent dual process for a Wright-Fisher diffusion with recombination and its application to haplotype partitioning}


\author[OxStat]{Robert C.\ Griffiths}
\author[WarStat,WarCS]{Paul A.\ Jenkins\corref{cor1}}
\author[Mon]{Sabin Lessard}
\address[OxStat]{Department of Statistics, University of Oxford, United Kingdom}
\address[WarStat]{Department of Statistics, University of Warwick, United Kingdom}
\address[WarCS]{Department of Computer Science, University of Warwick, United Kingdom}
\address[Mon]{D\'epartement de Math\'ematiques et de Statistique, Universit\'e de Montr\'eal, Montr\'eal, Canada}
\cortext[cor1]{Corresponding author. Address: Department of Statistics, University of Warwick, Coventry CV4 7AL, United Kingdom. E-mail: p.jenkins@warwick.ac.uk}
\begin{abstract}
Duality plays an important role in population genetics. It can relate results from forwards-in-time models of allele frequency evolution with those of backwards-in-time genealogical models; a well known example is the duality between the Wright-Fisher diffusion for genetic drift and its genealogical counterpart, the coalescent. There have been a number of articles extending this relationship to include other evolutionary processes such as mutation and selection, but little has been explored for models also incorporating crossover recombination. Here, we derive from first principles a new genealogical process which is dual to a Wright-Fisher diffusion model of drift, mutation, and recombination. The process is reminiscent of the \emph{ancestral recombination graph}, a widely-used multilocus genealogical model, but here ancestral lineages are typed and transition rates are regarded as being conditioned on an observed configuration at the leaves of the genealogy. Our approach is based on expressing a putative duality relationship between two models via their infinitesimal generators, and then seeking an appropriate test function to ensure the validity of the duality equation. This approach is quite general, and we use it to find dualities for several important variants, including both a discrete $L$-locus model of a gene and a continuous model in which mutation and recombination events are scattered along the gene according to continuous distributions. As an application of our results, we derive a series expansion for the transition function of the diffusion. Finally, we study in further detail the case in which mutation is absent. Then the dual process describes the dispersal of ancestral genetic material across the ancestors of a sample. The stationary distribution of this process is of particular interest; we show how duality relates this distribution to haplotype fixation probabilities. We develop an efficient method for computing such probabilities in multilocus models.
\end{abstract}

\begin{keyword}
coalescent \sep Wright-Fisher diffusion \sep recombination \sep duality
\end{keyword}
\end{frontmatter}

\section{Introduction}
\label{sec:introduction}
The concept of duality is a powerful technique for inferring the properties of one Markov process by looking at another related process, usually (as in this paper) discovered by considering the dynamics of the former in reverse time \citep[see][for recent review]{jan:kur:2014}. The idea has found many applications in population genetics, playing for example a central role in the constructions of the ancestral selection graph \citep{kro:neu:1997, neu:kro:1997} and the ancestral influence graph \citep{don:kur:1999:AAP}. One particularly well known duality is between the Wright-Fisher diffusion describing pure genetic drift and Kingman's coalescent \citep{kin:1982:SPA}. To illustrate the idea, consider a single neutral locus with two alleles. The Wright-Fisher diffusion $(X_t)_{t\geq 0}$ is the process on $[0,1]$ describing the evolution of the frequency of one allele, with infinitesimal generator
\begin{equation}
\label{eq:WFGenerator}
\sL f(x) = \frac{1}{2}x(1-x)f''(x)
\end{equation}
and domain $\sD(\sL) = C^2([0,1])$. The corresponding dual is the pure death process $(L_t)_{t\geq 0}$ on $\bbN = \{0,1,\dots\}$ with infinitesimal generator
\begin{equation}
\label{eq:KingmanGenerator}
\sK f(n) = \binom{n}{2}[f(n-1) - f(n)],
\end{equation}
which describes the dynamics of the \emph{ancestral}, or \emph{block-counting}, process of Kingman's coalescent. 

The two processes are dual with respect to the function $F: [0,1]\times \bbN \to \bbR$ defined by $F(x,n) = x^n$ (i.e.\ \emph{moment duals}): for each $x \in [0,1]$, $n \in \bbN$ and $t \geq 0$,
\begin{equation}
\label{eq:semigroupduality}
\bbE[F(X_t,n)\mid X_0 = x] = \bbE[F(x,L_t)\mid L_0 = n].
\end{equation}
We note for later use that this implies
\begin{equation}
\label{eq:duality}
\sL F(\cdot,n)(x) = \sK F(x,\cdot)(n), \qquad x\in[0,1], \quad n \in \bbN,
\end{equation}
and for general $\sL$, $\sK$, and $F$, the converse is also true under certain conditions on $F$ \citep{jan:kur:2014}. We also emphasise that, in this example and all others encountered in this paper, this duality is obtained via time-reversal, so that the time indices in the two processes run in different directions. Were we to run the two processes on a joint probability space, running $X_t$ from time 0 to $T$ would correspond to running $L_t$ \emph{backwards} from time $T$ to 0.

There have been numerous extensions to the models captured by \eqref{eq:duality}. For example, \citet{eth:gri:1990:AAP} describe a birth-death process which is dual to a two-locus Wright-Fisher diffusion with recombination between the two loci, and use it to prove an ergodic theorem for the diffusion. \citet{man:2013} uses the same process to derive a method to compute the transient moments of the diffusion. Generalising further, \citet{eth:kur:1993} describe a duality relationship between a Fleming-Viot process with very general mutation, selection, and recombination operators and a function-valued dual process analogous to the block-counting process of the coalescent. Here, the function changes state as a jump process reminiscent of \eqref{eq:KingmanGenerator} due to genetic drift, selection, and recombination, while mutation contributes a deterministic component evolving the function continuously between jumps. Dualities in which mutation is either deterministic or absent can be used to compute some quantities of interest in the two models, but they are not the most general available. 
In this paper our purpose is different: it is to develop a coalescent dual for the Wright-Fisher diffusion in which mutation contributes to the \emph{random} evolution of the dual process. 
This type of duality is important because the dual process describes the posterior genealogical dynamics of a sample, conditional on the allelic configuration observed in the present day. This is precisely the process of interest when one wishes to perform statistical inference under a coalescent model given some sample of genetic variation taken from a contemporary population \citep[see][for an introduction]{ste:2007}. For example, a careful approximation of these dynamics provides a suitable proposal process in an importance sampling algorithm \citep[examples for multilocus models include][]{gri:mar:1996, fea:don:2001, lar:etal:2002, gri:etal:2008, lar:les:2008, jen:gri:2011, kam:etal:2016}. This duality is also important because it provides a way of obtaining an expression for the transition function of the underlying diffusion \citep{gri:1979:AAP11:310,don:tav:1987, eth:gri:1993}.

Dualities of this latter form have been developed for a number of models extending \eqref{eq:WFGenerator} and \eqref{eq:KingmanGenerator}. These include models of mutation \citep{gri:1980, don:tav:1987}, natural selection \citep{bar:etal:2000, fea:2002, ste:don:2003, eth:gri:2009}, and $\gL$-coalescent dynamics \citep{eth:etal:2010}, as well as dualities for the Moran model which is a prelimit of the corresponding diffusion \citep{eth:gri:2009, eth:etal:2010}. Hitherto, there has not been described a corresponding dual process for models incorporating both mutation and recombination (by which we mean homologous, meiotic, crossover). [The existence of one such process is implicit in \citet{fea:don:2001} and \citet{gri:etal:2008}, but there the focus was on inference rather than any description of the process.] The goal of this paper is to derive such a duality relationship from first principles: in particular, we identify a genealogical dual for the Wright-Fisher diffusion with recombination which is similar to the ancestral recombination graph ({\sc arg}) of \citet{gri:mar:1997}; the key differences being that here the lineages are typed, and jumps in the genealogical process are to be understood in an \emph{a posteriori} sense. We obtain results both for a finite-locus model with general mutation structure and for its limit with continuous breakpoint distribution and infinitely-many-sites mutation. Our key object of study is a generalisation of the generator $\sL$ defined in \eqref{eq:WFGenerator} and the duality identity \eqref{eq:duality}. As applications of our approach we recover systems of recursive equations for the sampling distribution of the models (usually obtained more toilsomely by direct coalescent arguments), and we also
obtain the first transition function expansion for a diffusion model incorporating recombination. Finally, we study the case of no mutation in further detail and develop an efficient method for computing the distribution of how ancestral genetic material is dispersed across the ancestors of a contemporary population (the so-called \emph{partitioning process}). Using duality, these distributions also yield fixation probabilities for haplotypes in multilocus models.

The paper is structured as follows. In \sref{sec:warmup} we illustrate our approach with a known example of a $K$-allele system at a single locus. We then extend this in \sref{sec:Llocus} to an $L$-locus model. 
In \sref{sec:transition} we apply these results to develop a series expansion for the transition function of the diffusion. In \sref{sec:continuous} we generalise the model further, to a continuous model of a gene in which mutation and recombination rates are modelled by a probability density function. In \sref{sec:nomutation} we return to the $L$-locus model and study in further detail the dual process of a Wright-Fisher diffusion without mutation, and \sref{sec:discussion} concludes with a brief discussion. 

\section{Warm up: $K$-alleles at one locus} 
\label{sec:warmup}
To illustrate the main idea and to clarify some notation, we first consider an extension of \eqref{eq:duality} to incorporate $K$-alleles with parent-independent mutation ({\sc pim}) at one locus. The key step is to make a judicious choice of duality function $F$ so that, when we apply to it the infinitesimal generator of the underlying diffusion as an operator on the first variable of $F$, we \emph{recognise} the resulting expression as the action of another generator acting on the second variable. Further applications of this idea can be found in \citet{eth:gri:1993}, \citet{bar:etal:2000}, and \citet{eth:gri:2009}.

Denote the finite type space of the locus by $E = \{1,\dots, K\} =: [K]$. The mutation model is specified by a rate parameter $\gq > 0$ and a distribution $(P_i)_{i\in E}$ over the type of a mutant offspring (independent of the parental allele). Within this framework, the Wright-Fisher diffusion $\bfX = (\bfX_t)_{t\geq 0}$ has state space
\begin{equation}
\label{eq:simplex}
\Delta_E = \left\{\bfx = (x_i)_{i\in E} \in [0,1]^E: \sum_{i\in E} x_i = 1\right\}
\end{equation}
and generator 
\begin{equation}
\label{eq:WFGeneratorK}
\sL f(\bfx) = \frac{1}{2}\sum_{i\in E}\sum_{j\in E}x_i(\gd_{ij} - x_j) \frac{\partial^2}{\partial x_i \partial x_j}f(\bfx) + \frac{\gq}{2}\sum_{i\in E}(P_i - x_i) \frac{\partial}{\partial x_i}f(\bfx),
\end{equation}
where $\gd_{ij}$ denotes the Kronecker delta, and $\sD(\sL) = C^2(\gD_E)$. Motivated by the choice of $F(x,n)$ we encountered above, let us evaluate $\sL F(\bfx,\bfn)$ for $F:\Delta_E \times \bbN^{|E|} \to \bbR$ defined by
\begin{equation}
\label{eq:F}
F(\bfx,\bfn) = \frac{1}{m(\bfn)}\prod_{i\in E} x_i^{n_i},
\end{equation}
for some $m:\bbN^{|E|} \to \bbR$ yet to be determined (here, $\bfn = (n_1,n_2,\dots,n_K) \in \bbN^K$ and $|Z|$ denotes the cardinality of a set $Z$). We find
\begin{multline}
\label{eq:LF}
\sL F(\cdot,\bfn)(\bfx) =\\ \sum_{i\in E}\frac{n_i(n_i + \gq P_i - 1)}{2}\frac{m(\bfn-\bfe_i)}{m(\bfn)}F(\bfx,\bfn - \bfe_i) - \frac{n(n+\gq - 1)}{2}F(\bfx,\bfn),
\end{multline}
where $\bfe_i = (\gd_{ij})_{j=1,\dots,K}$. 
This can be interpreted as the generator of a pure jump process evolving $\bfn$ on $\bbN^K$ if 
we can choose $m(\bfn)$ so that \eqref{eq:LF} is in the form
\begin{equation}
\label{eq:RateForm}
\sL F(\cdot,\bfn)(\bfx) = \sum_{\widehat{\bfn}} q(\bfn,\widehat{\bfn})[F(\bfx,\widehat{\bfn}) - F(\bfx,\bfn)],
\end{equation}
where $\bfQ = (q(\cdot,\cdot))$ is a rate matrix; that is, it has negative diagonal elements, nonnegative off-diagonal elements, and rows summing to 0. Now, take expectations in \eqref{eq:RateForm} with respect to the stationary distribution of $\bfX$ and use the identity
\begin{equation}
\label{eq:master}
\bbE[\sL F(\bfX_\infty,\bfn)] = 0
\end{equation}
to obtain
\begin{equation}
\label{eq:0Qv}
\bfzero = \bfQ\bfv,
\end{equation}
where $\bfv = (\bbE[F(\bfX_\infty,\cdot)])$ is a column vector. Here we generically use $\bfX_\infty$ to denote the process at stationarity. Equation \eqref{eq:0Qv} can be ensured if $\bfv$ has identical entries. In other words, we should choose $m(\bfn)$ in \eqref{eq:F} so that $\bbE[F(\bfX_\infty,\bfn)]$ is a constant (and without loss of generality, 1). Using that $\bfX_\infty \sim \text{Dirichlet}(\gq P_1,\gq P_2 ,\dots, \gq P_K)$ \citep{wri:1949}, we find by taking expectation in \eqref{eq:F} that we require
\begin{equation}
\label{eq:m}
m(\bfn) = \bbE\left[\prod_{i\in E}(\bfX_\infty)_i^{n_i}\right] = \frac{\prod_{i\in E} (\gq P_i)_{n_i}}{(\gq)_n},
\end{equation}
where, for $\phi \in \bbR_{\geq 0}$, $(\phi)_n := \phi(\phi+1)\dots (\phi+n-1)$ denotes the $n$th ascending factorial of $\phi$ (and $(\phi)_0 := 1$). Then \eqref{eq:LF} becomes 
\begin{equation*}
\sL F(\cdot,\bfn)(\bfx) = \sum_{i\in E}\frac{n_i(n + \gq - 1)}{2}F(\bfx,\bfn - \bfe_i) - \frac{n(n+\gq - 1)}{2}F(\bfx,\bfn),
\end{equation*}
which is of the required form. (Perhaps surprisingly, this result suggests that the generator of the dual process does not depend on the $P_i$. However, the $P_i$ do appear in the function $F$, which is not just an arbitrary function.)

In summary, the diffusion with generator \eqref{eq:WFGeneratorK} is dual to a pure death process on $\bbN^K$ with transition rate matrix
\begin{equation}
\label{eq:rate1locus}
q(\bfn,\widehat{\bfn}) = \frac{n+\gq-1}{2}\times\begin{cases}
\phantom{-}n_i & \text{if  }\widehat{\bfn} = \bfn-\bfe_{\bfi},\\
-n, & \text{if  }\widehat{\bfn} = \bfn,
\end{cases}
\end{equation}
and the duality function is
\begin{equation}
\label{eq:Fb}
F(\bfx,\bfn) = \frac{(\gq)_n}{\prod_{i\in E} (\gq P_i)_{n_i}}\prod_{i\in E} x_i^{n_i}.
\end{equation}
From \eqref{eq:rate1locus}, an interpretation of the dual process is as follows: at rate $n(n+\theta - 1)/2$, choose a gene to coalesce or mutate. At the chosen event, the gene involved is of type $i$ with probability $n_i/n$. It is well known that, under a {\sc pim} model, the posterior probability that any particular lineage was involved in the most recent event is independent of its type. At either type of event, the lineage involved is lost, which is reminiscent of coalescent simulation under the \emph{prior}: (only) under a {\sc pim} model, lineages undergoing mutation can be killed, so a simulated coalescent history becomes a random forest with each tree describing the genealogy of the sampled descendants of a single mutant.

Inspection of \eqref{eq:Fb} might lead one to suspect that the duality between the two processes is really about equivalence of sampling distributions. Let us unpick this further by plugging \eqref{eq:Fb} into the duality equation \eqref{eq:semigroupduality} and comparing the two sides. We contend that we have obtained two ways of addressing the following: What is the ratio of (i) the probability that a random sample of size $n$ results in an ordered allelic configuration which, when unordered, yields the vector $\bfn$, given that the population allele frequencies a time $t$ ago were $\bfx$; and (ii) the same probability without this extra information about the population? Using \eqref{eq:m}, the left of \eqref{eq:semigroupduality} is
\begin{equation}
\label{eq:duality-left}
\bbE[F(\bfX_t,\bfn)\mid \bfX_0 = \bfx] = \frac{\bbE\left[\prod_{i\in E}(\bfX_t)_i^{n_i} \mid \bfX_0 = \bfx\right]}{\bbE\left[\prod_{i\in E}(\bfX_\infty)_i^{n_i}\right]}.
\end{equation}
If our random sample is interpreted as an independent and identically distributed ({\sc iid}) set of $n$ draws with replacement at time $t$ from an infinite population evolving as a Wright-Fisher diffusion, then the quantity \eqref{eq:duality-left} is our claimed ratio of probabilities. Next, to interpret the right of \eqref{eq:semigroupduality}, we must be able to assign a prior on $\bfL_0$. The appropriate choice is of course the sampling distribution of the coalescent, which can be shown under a {\sc pim} model to be given by $m(\bfn)$ in \eqref{eq:m} (this is possible solely by coalescent arguments, without having to invoke the diffusion). Now, the right of \eqref{eq:semigroupduality} is
\[
\bbE[F(\bfx,\bfL_t)\mid \bfL_0 = \bfn] = \bbE\left[\frac{\prod_{i\in E} x_i^{(\bfL_t)_i}}{m(\bfL_t)}\Bigg|\,\bfL_0 = \bfn\right].
\]
The quantity inside the expectation is a ratio of: the probability of obtaining an ordered random sample with configuration $\bfL_t$ from a population in state $\bfx$ to the same probability under the coalescent prior. Two applications of Bayes' theorem then gives
\begin{multline}
\bbE\left[\frac{\prod_{i\in E} x_i^{(\bfL_t)_i}}{m(\bfL_t)}\Bigg|\,\bfL_0 = \bfn\right] = \bbE\left[\frac{\bbP(\bfX_0\in \ud\bfx\mid \bfL_t)}{\bbP(\bfX_0 \in \ud \bfx)}\Bigg|\, \bfL_0 = \bfn\right] \\
= \frac{\bbP(\bfX_0\in \ud\bfx\mid \bfL_0 = \bfn)}{\bbP(\bfX_0 \in \ud \bfx)} 
= \frac{\bbP(\bfL_0 = \bfn \mid \bfX_0 = \bfx)}{\bbP(\bfL_0 = \bfn)},\label{eq:duality-right}
\end{multline}
which is again the claimed ratio (recalling that time 0 is different for $\bfL$ and $\bfX$). 
The right of \eqref{eq:semigroupduality} is therefore a ratio of \emph{coalescent} sampling probabilities. The numerator is the probability for a random sample with configuration $\bfn$ given that the lineages ancestral to this sample a time $t$ ago were typed by {\sc iid} sampling from a population in state $\bfx$, while the denominator is the same probability without this additional information. 
Under this interpretation, the duality function \eqref{eq:Fb} is also a ratio of sampling distributions, now without any offset of time:
\[
F(\bfx,\bfn) = \frac{\bbP(\bfL_0 = \bfn\mid \bfX_t = \bfx)}{\bbP(\bfL_0 = \bfn)}.
\]


\section{An $L$-locus model}
\label{sec:Llocus}
In this section we extend the above ideas to a multilocus model in which recombination can occur between each locus. We allow for more general mutation models than in \sref{sec:warmup}, though for convenience we continue to assume that the type space at each locus is finite. We first introduce some notation. Suppose a haplotype is determined by the alleles at each of $L$ loci. The set of possible alleles at locus $l$ is denoted $E_l$, so that the set of all possible haplotypes is $E = \bigtimes_{l=1}^L E_l$. The frequency of haplotype $\bfi =(i_1,\ldots,i_L) \in E$ will be denoted by $x_\bfi$. The mutation parameter at locus $l$ is $\gq_l$ and mutation occurs at that locus according to a transition matrix $\bfP^{(l)} = (P^{(l)}_{ij})_{i,j \in E_l}$; in other words, when a mutation occurs to a haplotype with allele $i$ at locus $l$, its offspring has allele $j$ at that locus with probability $P^{(l)}_{ij}$. We will denote the resulting haplotype by $\m{\bfi}{l}{j} := (i_1,\dots,i_{l-1},j,i_{l+1},\dots,i_L)$. 
Mutation occurs independently at each locus, so we may define mutation parameters across all loci as:
\begin{equation}
\label{eq:FiniteSites}
\gq = \sum_{l=1}^L \gq_l, \quad
\bfP = \sum_{l=1}^L \frac{\gq_l}{\gq} \bfI_{|E_1|} \otimes \dots \otimes \bfI_{|E_{l-1}|}  \otimes \bfP^{(l)} \otimes \bfI_{|E_{l+1}|} \otimes \dots \otimes \bfI_{|E_L|},
\end{equation}
where $\otimes$ denotes outer product, $\bfI_d$ is the $d \times d$ identity matrix, and $\bfP^{(l)}$ appears in the $l$th term in the product. 
Notice that if mutation is parent-independent at each locus (so $P_{ij}^{(l)} = P_j^{(l)}$ for each $l\in [L]$, $i,j \in E_l$), then the allele frequencies at each locus, $(X^{\{l\}}_{i_l})_{i_l \in E_l}$ with $X^{\{l\}}_{i_l} = \sum_{\bfj\in E :\, j_l = i_l} X_\bfj$, evolve marginally according to the one-locus model of \sref{sec:warmup}. 
For each $l=1,\ldots,L-1$, the rate of recombination between locus $l$ and $l+1$ is parametrised by $\gr_l$, and we let $\gr = \sum_{l=1}^{L-1} \gr_l$.

For a nonempty subset $A \subseteq [L]$, denote the projection of $E$ onto the co-ordinates in $A$ by $E_A$, i.e.~$E_A = \bigtimes_{l\in A}E_l$. Denote the marginal frequency of the alleles $\bfi \in E_A$ by
\begin{equation*}
x_{\bfi}^{A} = \sum_{\bfj\in E:\,\bfj|_A = \bfi} x_\bfj.
\end{equation*}
Sometimes we will also write $x_{\bfi}^A$ for $\bfi \in E_B$ and $B \supset A$, by which it is implied that we mean
\begin{equation}
\label{eq:marginalbis}
x_{\bfi|_A}^A= \sum_{\bfj\in E:\,\bfj|_A = \bfi|_A } x_\bfj.
\end{equation}
Finally, for $A \subseteq [L]$ we also define the sets
\begin{align*}
\Al &= A \cap \{1,\ldots ,l\}, &
\Ag &= A \cap \{l+1,\ldots,L\}.
\end{align*}
With this new definition for $E$, the multilocus Wright-Fisher diffusion process with recombination has state space $\Delta_E$ as in \eqref{eq:simplex}. Its generator is given by
\begin{multline}
\label{eq:WFGeneratorR}
	\sL = \frac{1}{2}\sum_{\bfi\in E}\left[\sum_{\bfj\in E} x_{\bfi}(\gd_{\bfi\bfj}-x_{\bfj})\frac{\partial}{\partial x_{\bfj}}\right.\\
	+ \left. \sum_{l=1}^L \gq_l \left[\sum_{j\in E_l}P^{(l)}_{ji_l}x_{\m{\bfi}{l}{j}} - x_{\bfi}\right] + \sum_{l=1}^{L-1} \gr_l(x_{\bfi}^{\Ll}x_{\bfi}^{\Lg} - x_{\bfi})\right]\frac{\partial}{\partial x_{\bfi}}
\end{multline}
and $\sD(\sL) = C^2(\gD_E)$.
\subsection{An `unreduced' dual}
\label{sec:unreduced}
To obtain the dual process of \eqref{eq:WFGeneratorR}, we follow the strategy outlined in \sref{sec:warmup}. First consider the test function corresponding to the unordered sampling distribution of $\bfn$:
\begin{equation}
\label{eq:Q}
\Q(\bfx,\bfn) = \binom{n}{\bfn}\prod_{\bfi\in E}x_\bfi^{n_\bfi},
\end{equation}
where $\binom{n}{\bfn} = n!/\prod_{\bfi\in E}n_{\bfi}!$ is the multinomial coefficient. We know from \sref{sec:warmup} that, as a function of $\bfx$, our duality function will be proportional to $\Q(\bfx, \bfn)$. 
In fact, rather than consider $\Q(\bfx,\bfn)$ directly, we can work with the probability generating function ({\sc pgf})
\begin{equation}
	\label{eq:generatorGs}
G_n(\bfs;\bfx) = \sum_{\bfn\in \nabla_{E,n}}\left[\prod_{\bfi\in E}s_\bfi^{n_\bfi} \right]\Q(\bfx,\bfn) = \left[\sum_{\bfi\in E}s_\bfi x_\bfi\right]^n,
\end{equation}
where $\bfs = (s_\bfi)_{\bfi\in E}$ and
\[
\nabla_{E,n} = \left\{\bfn = (n_\bfi)_{\bfi\in E} \in \bbN^{|E|}: \sum_{\bfi\in E} n_\bfi = n\right\},
\]
and then recover $\Q(\bfx,\bfn)$ from this later. (Here and throughout, define $\Q(\bfx,\bfn) = 0$ if $\bfx \not\in \gD_E$ or $\bfn \not\in \nabla_{E,n}$ for any $n$.) For other examples of the use of generating functions in the context of population genetics models with recombination, see \citet{gri:1981}, \citet{eth:gri:1990:JMB}, \citet{gri:1991}, and \citet{loh:etal:2011, loh:etal:2016}.

A simple calculation yields
\begin{multline}
	\label{eq:generatorGs2}
	\sL G_n(\bfs; \bfx) = \sum_{\bfi\in E}\left[\binom{n}{2}s_\bfi^2 x_\bfi G_{n-2}(\bfs;\bfx)
	+ \sum_{l=1}^L \frac{\gq_ln}{2} \sum_{j\in E_l} s_\bfi P^{(l)}_{ji_l} x_{\m{\bfi}{l}{j}}G_{n-1}(\bfs;\bfx)\right.\\
	\left. {}+ \sum_{l=1}^{L-1}\frac{\gr_ln}{2} s_\bfi x_{\bfi}^{\Ll} x_{\bfi}^{\Lg} G_{n-1}(\bfs;\bfx)\right] - \frac{n(n-1+\gq+\gr)}{2}G_n(\bfs;\bfx).
\end{multline}
The remainder of the strategy would be (i) to extract the an equation for $\sL S(\bfx,\bfn)$ from \eqref{eq:generatorGs2}, (ii) rearrange this equation into the required dual form, and (iii) read off a rate matrix for the dual process. However, we can see from \eqref{eq:Q}--\eqref{eq:generatorGs2} that no distinction has been made between loci that are ancestral and those that are non-ancestral with respect to an `initial' (present-day) sample. Consequently, the dual process would track both types of loci. This is the posterior analogue of the {\sc arg} of \citet{gri:mar:1997}, in which the total number of lineages can grow unboundedly 
backwards in time. It would be preferable to construct an analogue of the `reduced' version of the {\sc arg} in which only lineages ancestral to the initial sample are traced back in time \citep[see, e.g.][]{hud:1983, gol:1984, eth:gri:1990:JMB, gri:1991, gri:etal:2008}. We therefore move straight to the following subsection in which we construct a correspondingly reduced version of the dual process.

\subsection{A `reduced' dual}
\label{sec:reduced}
The state space for our reduced dual process will be
\[
\Xi_{E,n} = \left\{\bfn = (n_{\bfi}^A)_{\emptyset\neq A\subseteq [L],\bfi\in E_A}:  n_{\bfi}^A \in \bbN,\sum_{\emptyset\neq A\subseteq [L]}\sum_{\bfi \in E_A} n_{\bfi}^A = n\right\}.
\]
The set $A$ records those loci at which the haplotype $\bfi \in E_A$ is ancestral to an initial (present-day) sample, and the alleles at only those loci are recorded. The notation $n_{\bfi}^A$ is then the number of times the haplotype $\bfi$ is observed, and we will also let $n^A = \sum_{\bfi \in E_A}n_\bfi^A$. 
By analogy with the previous subsection, we define the test function
\begin{equation}
\label{eq:Q2}
\widetilde{\Q}(\bfx,\bfn) = \binom{n}{\bfn}\prod_{\emptyset\neq A\subseteq [L]}\prod_{\bfi\in E_A} (x_{\bfi}^{A})^{n_{\bfi}^A}.
\end{equation}
for $\bfx \in \gD_E$, $\bfn \in \cup_{n=1}^\infty \Xi_{E,n}$ (and $\widetilde{\Q}(\bfx,\bfn) = 0$ otherwise); and the generating function
\begin{multline}
	\label{eq:generatorGt}
\widetilde{G}_n(\bft;\bfx) = \sum_{\bfn\in\Xi_{E,n}}\prod_{\emptyset \neq A \subseteq [L]}\prod_{\bfi\in E_A} (t_{\bfi}^
{A})^{{n}_{\bfi}^A} \widetilde{\Q}(\bfx,\bfn)\\ = \left(\sum_{\emptyset \neq A\subseteq [L]}\sum_{\bfi\in E_A} t_{\bfi}^Ax_{\bfi}^A\right)^n = \left[\sum_{\bfj\in E}\left(\sum_{\emptyset \neq A\subseteq [L]}t_{\bfj|_A}^A\right)x_\bfj\right]^n,
\end{multline}
with dummy variables $\bft = (t_{\bfi}^A)_{\emptyset \neq A \subseteq [L],\bfi \in E_A}$, where the last equality follows from \eqref{eq:marginalbis} and reordering the summations.


Now our use of generating functions pays off. Comparing the right-hand expression in \eqref{eq:generatorGt} with \eqref{eq:generatorGs} shows that to evaluate $\sL \widetilde{G}_n(\bft;\bfx)$ we simply need to apply the mapping
\[
s_\bfi \mapsto \sum_{\emptyset\neq A\subseteq [L]} t_{\bfi|_A}^A
\]
in \eqref{eq:generatorGs2}. 
After some rearrangement we obtain
\begin{align}
	\label{eq:generatorGt2}
	\sL \widetilde{G}_n(\bft; \bfx) = {}& \sum_{\emptyset \neq A \subseteq [L]}\left[\binom{n}{2}\sum_{\emptyset\neq B\subseteq [L]}\sum_{\bfi\in E_{A\cup B}} t_{\bfi}^At_{\bfi}^B x_{\bfi}^{A\cup B} \widetilde{G}_{n-2}(\bft;\bfx) \right.\nonumber\\
	{}& \phantom{\sum_{\emptyset \neq A \subseteq [L]}} + \sum_{\bfi\in E_A} \sum_{l\in A} \frac{\gq_ln}{2} t_{\bfi}^A\sum_{j\in E_l} P^{(l)}_{ji_l} x_{\m{\bfi}{l}{j}}^A\widetilde{G}_{n-1}(\bft;\bfx) \nonumber\\
	 {}& \phantom{\sum_{\emptyset \neq A \subseteq [L]}} \left. + \sum_{\bfi\in E_A} \sum_{l=\min A}^{\max A -1} \frac{\gr_l n}{2} t_{\bfi}^A x_{\bfi}^{\Al}x_{\bfi}^{\Ag} \widetilde{G}_{n-1}(\bft;\bfx)\right] \nonumber\\
{}& -\left[\frac{n(n-1)}{2}\widetilde{G}_n(\bft;\bfx) +\sum_{l=1}^L \frac{n\gq_l}{2} \sum_{\substack{A\subseteq [L]:\\ l\in A}}\,\sum_{\bfi\in E_A} t_{\bfi}^Ax_{\bfi}^A\widetilde{G}_{n-1}(\bft;\bfx) \right.\nonumber\\
	{}& \phantom{\sum_{\emptyset \neq A \subseteq [L]}} \left. +\sum_{l=1}^{L-1}\frac{n\gr_l}{2} \sum_{\substack{A\subseteq [L]:\\  \Al \neq \emptyset\neq\Ag}}\sum_{\bfi\in E_A}t_{\bfi}^Ax_{\bfi}^A\widetilde{G}_{n-1}(\bft;\bfx)\right].
\end{align}
Now we can continue the strategy outlined in the previous subsection. Noting that
\[
\sL \widetilde{G}_n(\bft;\bfx) = \sum_{\bfn\in\Xi_{E,n}}\prod_{\emptyset \neq A \subseteq [L]}\prod_{\bfi\in E_A} (t_{\bfi}^
{A})^{{n}_{\bfi}^A} \sL\widetilde{\Q}(\bfx,\bfn),
\]
we can compare coefficients of $\prod_{\emptyset\neq A\subseteq [L]}\prod_{\bfi\in E_A}(t_{\bfi}^{A})^{n_{\bfi}^A}$ in \eqref{eq:generatorGt2} to obtain
{\allowdisplaybreaks
\begin{align}
	\sL \widetilde{\Q}(\bfx,\bfn) = {}& \sum_{\emptyset \neq A \subseteq [L]}\left[\binom{n}{2}\sum_{\emptyset\neq B\subseteq [L]}\sum_{\bfi\in E_{A\cup B}} x_{\bfi}^{A\cup B} \widetilde{\Q}(\bfx,\bfn-\bfe_\bfi^A - \bfe_\bfi^B) \right.\nonumber\\
	{}& \phantom{\sum_{\emptyset \neq A \subseteq [L]}} + \sum_{\bfi\in E_A} \sum_{l\in A} \frac{\gq_ln}{2} \sum_{j\in E_l} P^{(l)}_{ji_l} x_{\m{\bfi}{l}{j}}^A\widetilde{\Q}(\bfx,\bfn-\bfe_\bfi^A) \nonumber\\
	 {}& \phantom{\sum_{\emptyset \neq A \subseteq [L]}} \left. + \sum_{\bfi\in E_A} \sum_{l=\min A}^{\max A -1} \frac{\gr_l n}{2} x_{\bfi}^{\Al}x_{\bfi}^{\Ag} \widetilde{\Q}(\bfx,\bfn-\bfe_\bfi^A)\right] \nonumber\\
	{}& - \left[\frac{n(n-1)}{2}+\sum_{l=1}^L \frac{\gq_l}{2} \sum_{\substack{A\subseteq [L]:\\ l\in A}}n^A +\sum_{l=1}^{L-1}\frac{\gr_l}{2} \sum_{\substack{A\subseteq [L]:\\  \Al \neq \emptyset\neq\Ag}}n^A\right]\widetilde{\Q}(\bfx,\bfn). \label{eq:generatorQ}
\end{align}} To manipulate this into dual form, we further rearrange the right-hand side in order to remove the explicit instances of $\bfx$ outside of $\Q(\bfx,\cdot)$. Using \eqref{eq:ex1}--\eqref{eq:ex3} of \ref{app:identities} together with \eqref{eq:generatorQ}, we obtain
\begin{align}
	\label{eq:generatorQ3}
	\lefteqn{\sL \widetilde{\Q}(\bfx, \bfn) =}\hspace{50pt}&\notag\\
	{}& \frac{1}{2}\sum_{\emptyset\neq A\subseteq [L]}\left[\sum_{\emptyset\neq B\subseteq [L]}\sum_{\bfi\in E_{A\cup B}} n(n_{\bfi}^{A\cup B} + 1 - \gd_{A,A\cup B} - \gd_{B,A\cup B}) \right.\notag\\
&	\specialcell{\hfill\times\widetilde{\Q}(\bfx,\bfn-\bfe_{\bfi}^A-\bfe_{\bfi}^B+\bfe_{\bfi}^{A\cup B})}\nonumber\\
	{}& {}+\sum_{\bfi\in E_A} \sum_{l\in A} \gq_l \sum_{j\in E_l} P^{(l)}_{ji_l} (n_{\m{\bfi}{l}{j}}^A+1-\gd_{i_lj})\widetilde{\Q}(\bfx,\bfn-\bfe_{\bfi}^A+\bfe^A_{\m{\bfi}{l}{j}}) \nonumber\\
	 {}& \left. {}+ \sum_{\bfi\in E_A}\sum_{l=\min A}^{\max A-1}\gr_l \frac{(n_{\bfi}^{\Al}+1)(n_{\bfi}^{\Ag}+1)}{n+1}\widetilde{\Q}(\bfx,\bfn-\bfe_{\bfi}^A+\bfe_{\bfi}^{\Al}+\bfe_{\bfi}^{\Ag})\right] \nonumber\\
	{}& - \left[\frac{n(n-1)}{2}+\sum_{l=1}^L \frac{\gq_l}{2} \sum_{\substack{A\subseteq [L]:\\ l\in A}}n^A +\sum_{l=1}^{L-1}\frac{\gr_l}{2} \sum_{\substack{A\subseteq [L]:\\  \Al \neq \emptyset\neq\Ag}}n^A\right]\widetilde{\Q}(\bfx,\bfn).
\end{align}
If we divide \eqref{eq:generatorQ3} by $\bbE[\widetilde{\Q}(\bfX_\infty,\bfn)]$ then, after a little rearrangement, we have succeeded in writing $\sL F(\bfx,\bfn)$ in the form of \eqref{eq:RateForm} for the duality function
\begin{equation}
\label{eq:F3}
\widetilde{F}(\bfx,\bfn) = \frac{\widetilde{\Q}(\bfx,\bfn)}{\bbE[\widetilde{\Q}(\bfX_\infty,\bfn)]},
\end{equation}
from which we can read off the rate matrix for the dual process on $\cup_{n=1}^\infty \Xi_{E,n}$. We have therefore shown the following.
\begin{thm}
\label{thm:reducedFiniteSites}
Let
\begin{equation}
\label{eq:moments}
\widetilde{m}(\bfn) = \bbE\left[\prod_{\emptyset\neq A\subseteq [L]}\prod_{\bfi\in E_A}(X_{\bfi}^{A})^{{n}_{\bfi}^A}\right]
\end{equation}
(for $\bfn \in \cup_{n=1}^\infty \Xi_{E,n}$ and 0 otherwise), where expectation is taken with respect to the stationary distribution of $\bfX$. The Wright-Fisher diffusion $\bfX = (\bfX_t)_{t\geq 0}$ on $\gD_E$ with generator \eqref{eq:WFGeneratorR} is dual to a pure jump process $\widetilde{\bfL} = (\widetilde{\bfL}_t)_{t\geq 0}$ on $\cup_{n=1}^\infty \Xi_{E,n}$ with transitions given by the following description.
\begin{description}
\item[~Coalescence.] For each nonempty $A,B\subseteq [L]$ and each $\bfi\in E_{A\cup B}$, the process jumps to $\bfn-\bfe_\bfi^A - \bfe_\bfi^B + \bfe_\bfi^{A\cup B}$ at rate
\[
\frac{1}{2}\frac{\widetilde{m}(\bfn-\bfe_\bfi^A - \bfe_\bfi^B + \bfe_\bfi^{A\cup B})}{\widetilde{m}(\bfn)}n_\bfi^A(n_\bfi^B - \gd_{AB}).
\]
\item[~Mutation.] For each nonempty $A \subseteq [L]$, $l \in A$, $\bfi \in E_A$, and $j \in E_l$, the process jumps to $\bfn-\bfe_\bfi^A + \bfe_{\m{\bfi}{l}{j}}^{A}$ at rate
\[
\frac{1}{2}\frac{\widetilde{m}(\bfn-\bfe_\bfi^A + \bfe_{\m{\bfi}{l}{j}}^{A})}{\widetilde{m}(\bfn)}n_\bfi^A\gq_l P_{ji_l}^{(l)}.
\]
\item[~Recombination.] For each nonempty $A\subseteq [L]$, $\bfi \in E_A$, and $l= \min A,\dots$, $\max A -1$, the process jumps to $\bfn - \bfe_\bfi^A +\bfe_\bfi^{\Al} + \bfe_\bfi^{\Ag}$ at rate
\[
\frac{1}{2}\frac{\widetilde{m}(\bfn - \bfe_\bfi^A +\bfe_\bfi^{\Al} + \bfe_\bfi^{\Ag})}{\widetilde{m}(\bfn)}n_\bfi^A\gr_l.
\]
\end{description}
The duality function relating the two processes is $\widetilde{F}(\bfx,\bfn)$, given by \eqref{eq:F3} and \eqref{eq:Q2}.
\end{thm}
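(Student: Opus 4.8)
The plan is to complete the generator-based duality argument of \sref{sec:warmup}, now applied to the operator \eqref{eq:WFGeneratorR}. The substantive computation is already in hand: starting from the generating function of \eqref{eq:generatorGt}, applying $\sL$ via the substitution $s_\bfi\mapsto\sum_{A}t_{\bfi|_A}^A$ yields \eqref{eq:generatorGt2}, and matching coefficients of $\prod_{A,\bfi}(t_\bfi^A)^{n_\bfi^A}$ together with the rearrangements in \ref{app:identities} gives the expression \eqref{eq:generatorQ3} for $\sL\widetilde{\Q}(\cdot,\bfn)(\bfx)$. What remains is to (i) normalise so as to read off a rate matrix, (ii) check that this matrix genuinely generates a pure jump process, and (iii) lift the resulting generator identity to the semigroup duality \eqref{eq:semigroupduality}.

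For step (i) I would set $\widetilde{F}=\widetilde{\Q}/\bbE[\widetilde{\Q}(\bfX_\infty,\cdot)]$ as in \eqref{eq:F3}; since the normaliser does not depend on $\bfx$, dividing \eqref{eq:generatorQ3} through by $\bbE[\widetilde{\Q}(\bfX_\infty,\bfn)]$ and writing each target term $\widetilde{\Q}(\bfx,\widehat{\bfn})$ as $\bbE[\widetilde{\Q}(\bfX_\infty,\widehat{\bfn})]\,\widetilde{F}(\bfx,\widehat{\bfn})$ expresses $\sL\widetilde{F}(\cdot,\bfn)(\bfx)$ as a linear combination of the $\widetilde{F}(\bfx,\widehat{\bfn})$. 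The off-diagonal coefficient attached to each reachable $\widehat{\bfn}$ is the product of the coefficient in \eqref{eq:generatorQ3} and the moment ratio $\bbE[\widetilde{\Q}(\bfX_\infty,\widehat{\bfn})]/\bbE[\widetilde{\Q}(\bfX_\infty,\bfn)]$. The main bookkeeping obstacle lives here: one must verify that the multinomial coefficient $\binom{n}{\bfn}$ carried by $\widetilde{\Q}$ but absent from the pure moment $\widetilde{m}$ of \eqref{eq:moments} contributes exactly the combinatorial factor converting the target-indexed coefficients of \eqref{eq:generatorQ3} into the source-indexed rates stated in the theorem. For the coalescence move $\widehat{\bfn}=\bfn-\bfe_\bfi^A-\bfe_\bfi^B+\bfe_\bfi^{A\cup B}$, where the total drops to $n-1$, the ratio $\binom{n-1}{\widehat{\bfn}}/\binom{n}{\bfn}$ supplies precisely $n_\bfi^A n_\bfi^B/[n(n_\bfi^{A\cup B}+1)]$ in the generic case, turning the factor $n(n_\bfi^{A\cup B}+1-\gd_{A,A\cup B}-\gd_{B,A\cup B})$ into $n_\bfi^A(n_\bfi^B-\gd_{AB})$; the Kronecker deltas must then be tracked through the coincidence cases $B\subseteq A$ and $A\subseteq B$. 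The mutation ($\widehat{n}=n$) and recombination ($\widehat{n}=n+1$) terms are reconciled the same way, the factors $\binom{n}{\widehat{\bfn}}/\binom{n}{\bfn}$ and $\binom{n+1}{\widehat{\bfn}}/\binom{n}{\bfn}$ reducing the stated coefficients to $n_\bfi^A\gq_l P^{(l)}_{ji_l}$ and $n_\bfi^A\gr_l$ respectively.

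For step (ii), the off-diagonal rates are manifestly nonnegative, being products of nonnegative counts and rates with strictly positive stationary moments. The crucial point is that the diagonal coefficient produced by the last line of \eqref{eq:generatorQ3} equals the sum of the off-diagonal rates, so that $\sL\widetilde{F}(\cdot,\bfn)$ is genuinely of the increment form \eqref{eq:RateForm}. This is forced by the choice of normaliser exactly as in the warm-up: by construction $\bbE[\widetilde{F}(\bfX_\infty,\bfn)]=1$ for every $\bfn$, so applying $\bbE$ over the stationary law to the just-derived expansion, together with $\bbE[\sL\widetilde{F}(\bfX_\infty,\bfn)]=0$ (the analogue of \eqref{eq:master}), forces the diagonal to balance the off-diagonal sum. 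This establishes the generator duality identity \eqref{eq:duality} for $\widetilde{F}$, with $\bfQ=(q(\bfn,\widehat{\bfn}))$ the rate matrix whose entries are the coalescence, mutation, and recombination rates listed in the theorem.

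For step (iii), I would observe that the set of states reachable from any fixed $\bfn_0$ is finite: the total ancestral material $\sum_{A,\bfi}|A|\,n_\bfi^A$ is preserved by mutation and by recombination (since $\Al$ and $\Ag$ partition $A$) and is strictly decreased by any coalescence of overlapping types, while the lineage total $n$ is bounded above by this non-increasing quantity. Hence $\widetilde{\bfL}$ is a well-defined, non-explosive pure jump process, its generator acts boundedly on the relevant functions, and the generator identity lifts to the semigroup duality \eqref{eq:semigroupduality} by the standard argument relating \eqref{eq:duality} to \eqref{eq:semigroupduality} \citep{jan:kur:2014}. I expect step (i) --- the combinatorial reconciliation of \eqref{eq:generatorQ3} with the stated rates, including all coincidence cases --- to be the only genuine labour; steps (ii) and (iii) are structural and follow the template already established in \sref{sec:warmup}.
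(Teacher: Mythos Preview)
Your proposal is correct and follows essentially the same route as the paper: derive \eqref{eq:generatorQ3} via the generating function and the identities in \ref{app:identities}, divide through by $\bbE[\widetilde{\Q}(\bfX_\infty,\bfn)]$, and read off the rates. The paper compresses your steps (i)--(iii) into the single sentence preceding the theorem statement; your explicit verification of the multinomial-coefficient bookkeeping, the row-sum-zero argument via $\bbE[\sL\widetilde{F}(\bfX_\infty,\bfn)]=0$, and the finite-state-space observation (which the paper records only after \propref{prop:systemFiniteSites}) simply make rigorous what the paper leaves to the reader.
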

\begin{rem}
\label{rem:Q}
It is straightforward, though notationally cumbersome, to construct $\bfQ$ from the description given in \thmref{thm:reducedFiniteSites}. A given transition rate $q(\bfn,\widehat{\bfn})$ is obtained by summing over the rates in \thmref{thm:reducedFiniteSites} that correspond to a particular destination state $\widehat{\bfn}$. 
\end{rem}
\begin{cor}
\label{cor:transient}
The transient sampling distributions of $\bfX$ and $\widetilde{\bfL}$ are related by
\[
\bbE[\widetilde{\Q}(\bfX_t,\bfn)\mid \bfX_0 = \bfx] = \bbE\left[\frac{\bbE[\widetilde{\Q}(\bfX_\infty,\bfn)]}{\bbE[\widetilde{\Q}(\bfX_\infty,\widetilde{\bfL}_t)\mid \widetilde{\bfL}_t]}\widetilde{\Q}(\bfx,\widetilde{\bfL}_t)\mid \widetilde{\bfL}_0 = \bfn\right].
\]
\end{cor}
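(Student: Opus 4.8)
The plan is to recognise the corollary as nothing more than the semigroup duality relation \eqref{eq:semigroupduality} for the duality function $\widetilde{F}$, followed by one algebraic rearrangement. The derivation culminating in \eqref{eq:generatorQ3}, once divided through by $\bbE[\widetilde{\Q}(\bfX_\infty,\bfn)] = \widetilde{m}(\bfn)$, has already placed $\sL\widetilde{F}(\cdot,\bfn)(\bfx)$ in the rate form \eqref{eq:RateForm}, i.e.\ as the generator of $\widetilde{\bfL}$ acting on the second argument of $\widetilde{F}$. This is precisely the pointwise (generator) duality \eqref{eq:duality} for $\widetilde{F}$. The first step is therefore just to record this and to invoke the standard result \citep{jan:kur:2014} that pointwise duality upgrades, under suitable conditions, to semigroup duality, giving
\[
\bbE[\widetilde{F}(\bfX_t,\bfn)\mid \bfX_0 = \bfx] = \bbE[\widetilde{F}(\bfx,\widetilde{\bfL}_t)\mid \widetilde{\bfL}_0 = \bfn].
\]

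The second step is purely algebraic. Substituting the definition \eqref{eq:F3} on both sides, the denominator $\bbE[\widetilde{\Q}(\bfX_\infty,\bfn)] = \widetilde{m}(\bfn)$ on the left is constant in the $\bfX_t$-expectation and factors out, while on the right the normalising factor is $\widetilde{m}(\widetilde{\bfL}_t) = \bbE[\widetilde{\Q}(\bfX_\infty,\widetilde{\bfL}_t)\mid\widetilde{\bfL}_t]$. Multiplying both sides by the constant $\widetilde{m}(\bfn)$ and absorbing it into the right-hand expectation then produces exactly the claimed identity. This manipulation needs only $\widetilde{m}(\bfn) > 0$, which holds because the marginal allele frequencies $X_{\bfi}^A$ are strictly positive under the (Dirichlet-type) stationary law, so every moment \eqref{eq:moments} is strictly positive.

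The only real content, and the step I expect to be the main (albeit mild) obstacle, is verifying the hypotheses that licence the upgrade to \eqref{eq:semigroupduality}. The cleanest route is to observe that, started from a fixed $\bfn_0 \in \Xi_{E,n}$, the dual $\widetilde{\bfL}$ only ever visits finitely many states. Indeed, the total quantity of ancestral material $\sum_{\emptyset\neq A\subseteq[L]}\sum_{\bfi\in E_A}|A|\,n_{\bfi}^A$ is preserved by the mutation and recombination moves (since $\Al$ and $\Ag$ partition $A$) and strictly decreases by $|A\cap B|$ at each coalescence, hence is non-increasing; the lineage count $n$ is thus bounded by $L$ times its initial value, and each lineage type lies in the finite set $\bigcup_A E_A$. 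Consequently the generator of $\widetilde{\bfL}$ is bounded and $\widetilde{\bfL}$ is non-explosive, while $\widetilde{F}(\cdot,\bfn)$ is a polynomial in $\bfx$, hence bounded on the compact simplex $\gD_E$ and contained in $C^2(\gD_E) = \sD(\sL)$. With the integrability and domain conditions of the duality lemma thereby trivially satisfied, the generator identity integrates to \eqref{eq:semigroupduality}, and the corollary follows.
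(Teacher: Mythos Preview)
Your proposal is correct and follows exactly the paper's approach: the paper's proof simply states that the corollary ``follows immediately'' from the semigroup duality equation $\bbE[\widetilde{F}(\bfX_t,\bfn)\mid \bfX_0 = \bfx] = \bbE[\widetilde{F}(\bfx,\widetilde{\bfL}_t)\mid \widetilde{\bfL}_0 = \bfn]$ together with the definition \eqref{eq:F3}, which is precisely what you do. You in fact supply more than the paper does, by explicitly checking the finite-state-space/boundedness hypotheses needed to upgrade the generator identity to semigroup duality (the paper leaves this implicit, having already noted that $\widetilde{\bfL}$ evolves on a finite state space because the degree is non-increasing); one minor caveat is that the stationary law here is not literally Dirichlet, so your parenthetical ``(Dirichlet-type)'' is a slight overstatement, though the positivity of $\widetilde m(\bfn)$ is unproblematic under the standing irreducibility assumptions on the $\bfP^{(l)}$.
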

\begin{proof}
This follows immediately from the duality equation
\[
\bbE\left[\widetilde{F}(\bfX_t,\bfn)\mid \bfX_0 = \bfx\right] = \bbE\left[\widetilde{F}(\bfx,\widetilde{\bfL}_t)\mid \widetilde{\bfL}_0 = \bfn\right]
\]
and \eqref{eq:F3}.
\end{proof}
It is possible to provide a genealogical interpretation of \thmref{thm:reducedFiniteSites} in a spirit similar to that given in \sref{sec:warmup}, the main differences being that here we account for recombination between multiple loci and construct the dual process so that it tracks only lineages ancestral to the initial sample. In summary, the duality function \eqref{eq:F3} is proportional to the ordered sampling distribution $\prod_{\emptyset\neq A\subseteq [L]}\prod_{\bfi\in E_A}(X_{\bfi}^{A})^{{n}_{\bfi}^A}$ of a haplotype configuration $\bfn$, when sampling is performed {\sc iid} from a population with haplotype frequencies $\bfx$. In this interpretation, the set of loci $A$ at which a sampled haplotype is actually observed 
is nonrandom. The normalisation constant of \eqref{eq:F3} is then $\widetilde{m}(\bfn)$, the sampling distribution for $\bfn$ when the population haplotype frequencies are at stationarity; this ensures both that equation \eqref{eq:generatorQ3} can easily be identified as the generator of a process acting on $\bfn$, and that the duality equation \eqref{eq:semigroupduality} has a straightforward interpretation as two ways of looking at (a ratio of) sampling probabilities. In this duality equation it is necessary to consider the genealogy of the present-day configuration $\bfn$ conditioned on the past state of the population, which gives rise to the posterior coalescent dynamics captured by the process $\widetilde{\bfL}$ described in \thmref{thm:reducedFiniteSites}. The ratios of terms in $\widetilde{m}(\cdot)$ in the transition rates of $\widetilde{\bfL}$ appear naturally as a time-reversal of the coalescent process. 
 
Of course, a major complication of the dual process here compared to that of \sref{sec:warmup} is that there is no closed-form expression for the stationary moments $\widetilde{m}(\bfn)$ of $\bfX$ [eq.\ \eqref{eq:moments}]. However, we can show that they satisfy a simple linear system.
\begin{prop}
\label{prop:systemFiniteSites}
For $\bfn \in \Xi_{E,n}$, 
the stationary 
moments $\widetilde{m}(\bfn)$ of \eqref{eq:moments} satisfy
\begin{align}
	\lefteqn{\left[n(n-1)+\sum_{l=1}^L \gq_l \sum_{\substack{A\subseteq [L]:\\ l\in A}}n^A +\sum_{l=1}^{L-1}\gr_l \sum_{\substack{A\subseteq [L]:\\  \Al \neq \emptyset\neq\Ag}}n^A\right]\widetilde{m}(\bfn) =}\hspace{40pt}&\notag\\
	{}& \sum_{\emptyset\neq A\subseteq [L]}\left[\sum_{\emptyset\neq B\subseteq [L]}\sum_{\bfi\in E_{A\cup B}} n_\bfi^A(n_\bfi^B - \gd_{AB}) \widetilde{m}(\bfn-\bfe_{\bfi}^A-\bfe_{\bfi}^B+\bfe_{\bfi}^{A\cup B})\right.\nonumber\\
	{}& {}+\sum_{\bfi\in E_A} \sum_{l\in A} \gq_l \sum_{j\in E_A} P^{(l)}_{ji_l} n_\bfi^A\widetilde{m}(\bfn-\bfe_{\bfi}^A+\bfe_{\m{\bfi}{l}{j}}^A) \nonumber\\
	 {}& \left. {}+ \sum_{\bfi\in E_A}\sum_{l=\min A}^{\max A-1}\gr_l n_\bfi^A\widetilde{m}(\bfn-\bfe_{\bfi}^A+\bfe_{\bfi}^{\Al}+\bfe_{\bfi}^{\Ag})\right]. \label{eq:SamplingDistributionFiniteSites}
\end{align}
A boundary condition is
\[
\widetilde{m}(\bfe_{i_1}^{[1]} + \bfe_{i_2}^{[2]} + \dots +\bfe_{i_L}^{[L]}) = \prod_{l=1}^L \mu^{(l)}_{i_l}, \qquad i_l \in E_l,
\]
where $\bfmath{\mu}^{(l)} = (\mu^{(l)}_1,\mu^{(l)}_2,\dots,\mu^{(l)}_{|E_l|})$ is the stationary distribution of $\bfP^{(l)}$.
\end{prop}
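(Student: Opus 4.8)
The plan is to read the linear system off the stationarity (``master'') identity $\bbE[\sL \widetilde{\Q}(\bfX_\infty,\bfn)]=0$, which holds because $\widetilde{\Q}(\cdot,\bfn)$ is a polynomial and hence lies in $\sD(\sL)=C^2(\gD_E)$, and because averaging any generator-image against the stationary law gives zero [cf.\ \eqref{eq:master}]. Applying this to the exact expression \eqref{eq:generatorQ3} for $\sL\widetilde{\Q}(\bfx,\bfn)$ and taking expectations term by term reduces everything to stationary moments once we use the elementary identity $\bbE[\widetilde{\Q}(\bfX_\infty,\bfm)]=\binom{|\bfm|}{\bfm}\,\widetilde{m}(\bfm)$ relating the test function \eqref{eq:Q2} to the moments \eqref{eq:moments}.

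First I would take expectations of each term of \eqref{eq:generatorQ3} and replace every $\widetilde{\Q}(\bfX_\infty,\cdot)$ by the corresponding $\binom{|\cdot|}{\cdot}\widetilde{m}(\cdot)$. The destination states have total mass $n-1$ (coalescence), $n$ (mutation) and $n+1$ (recombination), so after dividing the identity through by the diagonal coefficient $\binom{n}{\bfn}$ one is left with ratios of multinomial coefficients. The key computation is that these ratios cancel exactly the $+1$ offsets carried in \eqref{eq:generatorQ3}: for coalescence $\binom{n-1}{\bfn-\bfe_\bfi^A-\bfe_\bfi^B+\bfe_\bfi^{A\cup B}}/\binom{n}{\bfn}=\tfrac1n\,n_\bfi^A n_\bfi^B/(n_\bfi^{A\cup B}+1)$, which turns the generic factor $n(n_\bfi^{A\cup B}+1)$ into $n_\bfi^A(n_\bfi^B-\gd_{AB})$ (the degenerate cases $B\subseteq A$ being handled by the Kronecker deltas in \eqref{eq:generatorQ3}); and similarly the offsets $n_{\m{\bfi}{l}{j}}^A+1$ and $(n_\bfi^{\Al}+1)(n_\bfi^{\Ag}+1)$ cancel, leaving the clean weights $n_\bfi^A\gq_l P^{(l)}_{ji_l}$ and $n_\bfi^A\gr_l$. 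Multiplying by $2$ then yields exactly the right-hand side of \eqref{eq:SamplingDistributionFiniteSites}.

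The step that needs the most care is the diagonal. The killing rate inherited from \eqref{eq:generatorGs2} still contains the full mutation rate $n\gq/2$, whereas the correct mutation part of the diagonal is $\tfrac12\sum_l\gq_l\sum_{A:\,l\in A}n^A$, i.e.\ only mutations at loci at which a lineage is \emph{ancestral} should contribute. I would therefore recompute $\sL g$ directly for the monomial $g(\bfx)=\prod_{A,\bfi}(x_\bfi^A)^{n_\bfi^A}$ using $\partial x_\bfj^A/\partial x_\bfi=\gd_{\bfi|_A,\bfj}$, and check that the outflow $-\tfrac12\sum_l\gq_l\sum_A n^A\,g$ is partially cancelled by ``silent'' mutation events at non-ancestral loci ($l\notin A$): summing $\sum_j P^{(l)}_{ji_l}x_{\m{\bfi}{l}{j}}$ over the free locus-$l$ coordinate returns $x^A_{\bfi|_A}$, hence a positive diagonal contribution $+\tfrac12\sum_l\gq_l\sum_{A:\,l\notin A}n^A\,g$. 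Since $\sum_A n^A=n$, the two combine to the refined rate above; the analogous (already performed) cancellation for recombination across breakpoints lying outside $[\min A,\max A]$ produces $\tfrac12\sum_l\gr_l\sum_{A:\,\Al\neq\emptyset\neq\Ag}n^A$, and drift contributes $n(n-1)/2$. Balancing inflow against this diagonal via $\bbE[\sL g(\bfX_\infty)]=0$ is exactly \eqref{eq:SamplingDistributionFiniteSites}; this silent-event accounting is the main obstacle, the rest being routine.

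For the boundary condition I would specialise to configurations supported on single loci. The base case is a single lineage ancestral at one locus: applying the master identity to $x_i^{\{l\}}$ shows $\bbE[X_i^{\{l\}}]=\mu^{(l)}_i$, since drift and recombination leave single-locus marginals unchanged in mean while mutation forces $\bbE[\bfX^{\{l\}}]$ to be the stationary vector of $\bfP^{(l)}$. For the full product state $\bfn=\sum_{l}\bfe_{i_l}^{\{l\}}$ I would apply $\bbE[\sL(\prod_l x^{\{l\}}_{i_l})]=0$; recombination is then silent (every $A$ is a singleton) and the drift term introduces two-locus marginals $x^{\{l,l'\}}$, so one obtains a small coupled linear system in the cross-moments. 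Solving it---as in the two-locus case, where combining the relations for $\bbE[x^{\{1\}}_a x^{\{2\}}_b]$ and $\bbE[x^{\{1,2\}}_{ab}]$ forces the former to equal $\mu^{(1)}_a\mu^{(2)}_b$---shows that the cross-moments factorise, giving $\widetilde{m}(\sum_l\bfe_{i_l}^{\{l\}})=\prod_l\mu^{(l)}_{i_l}$. I expect establishing this factorisation, rather than the single-locus mean itself, to be the only delicate point of the boundary argument.
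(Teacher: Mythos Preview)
Your core approach---take the stationary expectation of \eqref{eq:generatorQ3} and use $\bbE[\sL\widetilde{\Q}(\bfX_\infty,\bfn)]=0$---is exactly the paper's proof, which dispatches the recursion in a single sentence. The multinomial-ratio bookkeeping you spell out is precisely what is needed to pass from $\bbE[\widetilde{\Q}(\bfX_\infty,\cdot)]=\binom{|\cdot|}{\cdot}\,\widetilde m(\cdot)$ to the moment form, and the paper leaves this implicit.

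Two points of genuine difference are worth flagging. First, you are right that the mutation part of the diagonal in \eqref{eq:generatorQ3} reads $n\gq/2$ whereas the proposition carries the refined rate $\tfrac12\sum_l\gq_l\sum_{A:\,l\in A}n^A$; your ``silent mutation'' accounting (mutations at $l\notin A$ return $x_{\bfi}^A$ and hence feed the diagonal) is the correct way to reconcile these, and is a detail the paper's one-line proof glosses over. Second, for the boundary condition the paper simply invokes \citet[Theorem~1]{fea:2003:JAP}, whereas you propose to derive the factorisation $\widetilde m(\sum_l \bfe_{i_l}^{\{l\}})=\prod_l\mu_{i_l}^{(l)}$ from scratch via a coupled system in the cross-moments. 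Your route is valid but considerably more laborious; the cited reference gives the result directly and avoids the inductive untangling you anticipate as ``the only delicate point''.
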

\begin{proof}
Take expectation with respect to the stationary distribution of $\bfX$ in \eqref{eq:generatorQ3} and apply the identity $\bbE[\sL\widetilde{\Q}(\bfX_\infty,\bfn)] = 0$ to get \eqref{eq:SamplingDistributionFiniteSites}. The boundary condition follows by the argument of \citet[Theorem 1]{fea:2003:JAP}.
\end{proof}
The advantage of a \emph{reduced} dual is now apparent. If we define the \emph{degree} of $\bfn$ by
\[
\text{degree}(\bfn) = \sum_{\emptyset \neq A\subseteq [L]}|A|n^A,
\]
the total length of all ancestral material in the sample, then the system \eqref{eq:SamplingDistributionFiniteSites} is \emph{closed} in the sense that terms on the right of \eqref{eq:SamplingDistributionFiniteSites} have degree less than or equal to that of $\bfn$, and so it can in principle be solved (e.g.\ by matrix inversion). 
The process $\widetilde{\bfL}$ evolves on a \emph{finite} state space. This is not true of the unreduced dual.


Recursive systems similar to \eqref{eq:SamplingDistributionFiniteSites} have been studied by \citet{gri:1981}, 
\citet{gol:1984}, \citet{eth:gri:1990:JMB}, \citet{lar:etal:2002}, \citet{fea:2003:JAP}, \citet{gri:etal:2008}, \citet{jen:son:2009:G}, \citet{lar:les:2008}, and \citet{jen:gri:2011}, among others. With the exception of \citet{lar:les:2008}, whose eq.\ (1) is equal to \eqref{eq:SamplingDistributionFiniteSites} up to a combinatorial factor, typically these studies focus on special cases such as two loci or parameters uniform across loci. It is common in studying systems of this form to derive them by a probabilistic argument; in particular, by describing the associated coalescent process and partitioning on each of the most recent possible events going back in time. This approach can be combinatorially involving, and we emphasise the cleanliness of the method taken in this paper: once we have the generator \eqref{eq:WFGeneratorR}, the rest follows mechanistically.

\subsection{A closed-form solution}
One special case of the above model permits a closed-form solution:
mutation within each locus is parent-independent ($P^{(l)}_{ij} = P^{(l)}_j$ for each $l\in [L]$, $i,j \in E_l$), and $\gr_l = \infty$ for each $l\in [L]$. 
Then each locus evolves independently, 
and the dual process is projected onto the subspace $\Xi^\infty_{E,n} \subseteq \Xi_{E,n}$ given by
\[
\Xi^\infty_{E,n} = \left\{ \bfn \in \Xi_{E,n}:  n^A =0\;\forall A \notin \left\{\{1\},\{2\},\dots,\{L\}\right\}\right\};
\]
that is, one in which any haplotype is ancestral at precisely one locus. The projection is achieved by mapping a haplotype $\bfi \in E_A$ with $A = \{a_1,\dots,a_{|A|}\}$ to $|A|$ different haplotypes of type $i_1 \in E_{a_1}$, $i_2 \in E_{a_2},\dots,i_{|A|} \in E_{a_{|A|}}$; recombination instantaneously breaks apart each locus. The generator for this model is a sum of $L$ generators acting independently on each locus \citep[see][for further details]{eth:gri:1990:AAP}, from which we can write down the transition rates of the dual process on $\cup_{m=1}^{Ln}\Xi^\infty_{E,n}$:
\[
\widetilde{q}(\bfn,\widehat{\bfn}) = \frac{1}{2}\times \begin{cases}
n_{i}^{\{l\}}(n^{\{l\}} + \gq_l - 1) & \text{if  }\widehat{\bfn} = \bfn-\bfe_{i}^{\{l\}}\\
& \text{ where  }l\in[L],i \in E_{l},\\
\ds-\sum_{l=1}^L n^{\{l\}}(n^{\{l\}}+\gq_l-1) & \text{if  }\widehat{\bfn} = \bfn.
\end{cases}
\]
The duality function in this case is
\[
\widetilde{F}(\bfx,\bfn) = \prod_{l=1}^L\left[\frac{(\gq_l)_{n^{\{l\}}}}{\prod_{i\in E_l} (\gq_l P^{(l)}_i)_{n_i^{\{l\}}}}\prod_{i\in E_l} (x_i^{\{l\}})^{n^{\{l\}}_i}\right],
\]
which is simply the product of $L$ copies of the one-locus duality function encountered in \sref{sec:warmup}, as it must be under free recombination.
\section{A transition function expansion}
\label{sec:transition}
Duality can be used to obtain an expression for the transition function of the Wright-Fisher diffusion. Here we tackle the diffusion with generator \eqref{eq:WFGeneratorR}, whose transition density with respect to Lebesgue measure, after evolving from $\bfx$ for a time $t$, we denote by $f(\bfx,\cdot; t)$, and whose stationary distribution we denote by $\pi(\cdot)$. 
To our knowledge this is the first time an expression for the transition function of a Wright-Fisher diffusion has incorporated recombination.

For simplicity we restrict our attention to `completely specified' samples: those for which $n_\bfi^A = 0$ if $A \neq [L]$, and we write $n_\bfi$ for $n_\bfi^{[L]}$, and so on. Then the sampling distribution of $\bfn$ can be written in the simpler form of \eqref{eq:Q}. Our result will be expressed in terms of the transitions of the dual process, which we denote $p_{\bfn\bfl}(t) := \bbP(\widetilde\bfL_t = \bfl\mid \widetilde\bfL_0 = \bfn)$. In particular, we let $n \to \infty$ in such a way that $\bfn/n \to \bfy \in \gD_E$ \citep[this idea is formalised by][p125]{bar:etal:2000} and write
\begin{equation}
\label{eq:pylt}
p_{\bfy\bfl}(t) := \lim_{n\to\infty: \frac{\bfn}{n}\to \bfy} p_{\bfn\bfl}(t).
\end{equation}
The existence of this limit ensures that our typed, reduced, coalescent process $\widetilde\bfL$ can be initiated from infinitely many lineages.

\begin{thm} Suppose that \eqref{eq:pylt} defines a probability distribution on $\bigcup_{n=1}^\infty \Xi_{E,n}$ for each $t > 0$, $\bfy \in \gD_E$. Then the transition density function of the Wright-Fisher diffusion with generator \eqref{eq:WFGeneratorR} is given by
\begin{equation}
\label{eq:transition}
f(\bfx,\bfy; t) = \pi(\bfy) \sum_{\bfl\in\bigcup_{l \in \bbN}\Xi_{E,l}} \frac{p_{\bfy\bfl}(t)}{\widetilde{m}(\bfl)}\ds\prod_{\emptyset \neq A \subseteq [L]} \prod_{\bfi \in E_A} (x_\bfi^A)^{l_\bfi^A},
\end{equation}
with $\widetilde{m}(\cdot)$ as in \eqref{eq:moments}.
\end{thm}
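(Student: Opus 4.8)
The plan is to identify the right-hand side of \eqref{eq:transition} as the transition density by checking that, as a function of the \emph{backward} variables $(\bfx,t)$, it solves Kolmogorov's backward equation $\partial_t f(\bfx,\bfy;t)=\sL f(\cdot,\bfy;t)(\bfx)$ together with the correct initial data, and then appealing to uniqueness. The useful first observation is that the multinomial coefficient in $\widetilde{\Q}$ cancels against the one hidden in the normalisation, since $\bbE[\widetilde{\Q}(\bfX_\infty,\bfn)]=\binom{n}{\bfn}\widetilde{m}(\bfn)$. Hence the summand in \eqref{eq:transition} is precisely the duality function of \thmref{thm:reducedFiniteSites}: writing $\widetilde{F}(\bfx,\bfl)=\widetilde{\Q}(\bfx,\bfl)/\bbE[\widetilde{\Q}(\bfX_\infty,\bfl)]$ as in \eqref{eq:F3}, the claim reads
\[
f(\bfx,\bfy;t)=\pi(\bfy)\sum_{\bfl}p_{\bfy\bfl}(t)\,\widetilde{F}(\bfx,\bfl).
\]
Only the polynomial $\widetilde{F}(\cdot,\bfl)$ carries the variable $\bfx$, which is exactly what makes the backward equation tractable.

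For the computational core I would apply $\sL$ in the $\bfx$ variable term by term. By the very construction of the dual — the rearrangement of \eqref{eq:generatorQ3} into the rate form \eqref{eq:RateForm} that underlies \thmref{thm:reducedFiniteSites} — one has $\sL\widetilde{F}(\cdot,\bfl)(\bfx)=\sum_{\bfm}q(\bfl,\bfm)\widetilde{F}(\bfx,\bfm)$, where $q(\cdot,\cdot)$ is the dual rate matrix. Substituting, interchanging the sums over $\bfl$ and $\bfm$ (each state communicates with only finitely many others, so every column of $q$ is finitely supported), and then invoking the forward (master) equation $\tfrac{\ud}{\ud t}p_{\bfy\bfm}(t)=\sum_{\bfl}p_{\bfy\bfl}(t)q(\bfl,\bfm)$ for the dual chain started from infinitely many lineages, I would obtain
\[
\sL f(\cdot,\bfy;t)(\bfx)=\pi(\bfy)\sum_{\bfm}\Big(\sum_{\bfl}p_{\bfy\bfl}(t)q(\bfl,\bfm)\Big)\widetilde{F}(\bfx,\bfm)=\pi(\bfy)\sum_{\bfm}\tfrac{\ud}{\ud t}p_{\bfy\bfm}(t)\,\widetilde{F}(\bfx,\bfm)=\partial_t f(\bfx,\bfy;t),
\]
so that the candidate solves the backward equation (the term-by-term differentiation being underwritten by the hypothesis that $(p_{\bfy\bfl}(t))_\bfl$ is a probability vector, which forces the series and its derivatives to converge on $\gD_E$). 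Nonnegativity of $f$ is then immediate.

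It remains to pin the solution down by its behaviour as $t\downarrow 0$ and to confirm total mass one; equivalently, I would verify the semigroup identity $\int f(\bfx,\bfy;t)g(\bfy)\,\ud\bfy=\bbE[g(\bfX_t)\mid\bfX_0=\bfx]$ directly on the measure-determining class $g=\widetilde{F}(\cdot,\bfn)$. For such $g$ the right-hand side equals $\sum_{\bfl}p_{\bfn\bfl}(t)\widetilde{F}(\bfx,\bfl)$ by the duality of \thmref{thm:reducedFiniteSites}, while the left-hand side, after noting that $\pi(\bfy)\widetilde{F}(\bfy,\bfn)$ is exactly the \emph{posterior} density of $\bfX_\infty$ given a sample $\bfn$ (indeed $\int\pi(\bfy)\prod_{\emptyset\neq A\subseteq[L]}\prod_{\bfi\in E_A}(y_\bfi^A)^{n_\bfi^A}\,\ud\bfy=\widetilde{m}(\bfn)$ by \eqref{eq:moments}), reduces by linear independence of the monomials $\widetilde{F}(\bfx,\bfl)$ to the consistency identity
\[
\int\pi(\bfy)\,\widetilde{F}(\bfy,\bfn)\,p_{\bfy\bfl}(t)\,\ud\bfy=p_{\bfn\bfl}(t).
\]

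The hard part will be this last identity, together with the closely related $t\downarrow 0$ limit. It asserts that running the dual from a finite configuration $\bfn$ coincides with first drawing the population $\bfy$ from the posterior given $\bfn$ and then running the dual from infinitely many lineages at $\bfy$ — a coming-down-from-infinity statement whose proof demands justifying the interchange of the limit \eqref{eq:pylt} with integration against the posterior, and controlling the concentration of the large-sample monomial $\widetilde{F}(\bfx,\bfl)=\prod_{\emptyset\neq A\subseteq[L]}\prod_{\bfi\in E_A}(x_\bfi^A)^{l_\bfi^A}/\widetilde{m}(\bfl)$ as the degree of $\bfl$ grows. This is precisely where the standing hypothesis that $(p_{\bfy\bfl}(t))_\bfl$ is a genuine probability distribution is indispensable: it guarantees that no ancestral mass escapes to infinity in the limit, so that the series represents the full semigroup rather than a defective sub-Markovian kernel.
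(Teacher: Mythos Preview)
Your approach is genuinely different from the paper's. The paper does not verify the backward equation at all; instead it starts from the finite-sample duality of Corollary~\ref{cor:transient}, writes $\bbE[\Q(\bfX_t,\bfn)\mid\bfX_0=\bfx]$ as a sum over dual states, and then passes to the limit $n\to\infty$, $\bfn/n\to\bfy$ via \emph{sample inversion}: for $\bfN\sim\text{Multinomial}(n,\bfz)$ one has $\bbE[u(\bfN/n)]\to u(\bfz)$ uniformly in $\bfz$, and this (with $u$ ultimately replaced by a Dirac mass at $\bfy$, justified by \citet{bar:etal:2000}) converts the moment identity directly into a pointwise identity for the density. The merit of this route is that the consistency between the finite-$n$ and limiting transition probabilities is \emph{built in}: one never leaves equations already known to hold for each finite $n$.

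Your backward-equation computation is clean, but the difficulty is concentrated exactly where you place it. The consistency identity $\int\pi(\bfy)\,\widetilde{F}(\bfy,\bfn)\,p_{\bfy\bfl}(t)\,\ud\bfy=p_{\bfn\bfl}(t)$ is a nontrivial entrance-from-infinity statement that is not delivered by the theorem's hypothesis (which asserts only that $p_{\bfy\cdot}(t)$ is a probability vector, not any compatibility with the finite-$n$ chain), and proving it appears to require essentially the same large-sample limit the paper invokes. Two smaller issues: the forward equation $\tfrac{\ud}{\ud t}p_{\bfy\bfm}(t)=\sum_\bfl p_{\bfy\bfl}(t)\,q(\bfl,\bfm)$ must itself be justified, since $p_{\bfy\bfl}(t)$ is defined only as a pointwise limit and the hypothesis grants no differentiability in $t$; and uniqueness for the backward equation on a degenerate diffusion such as Wright--Fisher is not automatic and would need its own argument.
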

\begin{proof}
The proof is similar to the rigorous treatment given in \citet{bar:etal:2000} and so we give only a summary. Corollary \ref{cor:transient} 
easily leads to 
\begin{multline}
\bbE\left[\Q(\bfX_t,\bfn)
|\, \bfX_0 = \bfx\right] = \binom{n}{\bfn}m(\bfn)\bbE\left[\frac{\ds \prod_{\emptyset\neq A \subseteq [L]} \prod_{\bfi \in E_A} (x_\bfi^A)^{\widetilde L_\bfi^A(t)}}{\widetilde{m}(\widetilde\bfL_t)}\Bigg|\, \widetilde\bfL_0 = \bfn\right]\\
= \binom{n}{\bfn}m(\bfn)\sum_{\bfl\in\bigcup_{l \leq Ln}\Xi_{E,l}} \frac{p_{\bfn\bfl}(t)}{\widetilde{m}(\bfl)}\ds\prod_{\emptyset \neq A \subseteq [L]} \prod_{\bfi \in E_A} (x_\bfi^A)^{l_\bfi^A}.\label{eq:inversion}
\end{multline}
Our aim is to let $n \to \infty$ and $\bfn/n \to \bfy$ in \eqref{eq:inversion}. Letting $n \to \infty$ on the left-hand side is tantamount to identifying a distribution from its moments. \citet{eth:gri:2009} note that this is an application of `sample inversion': for a continuous function $u: \gD_E \to \bbR$ and a random sample $\bfN \sim \text{Multinomial}(n,\bfz)$,
\[
\bbE\left[u\left(\frac{\bfN}{n}\right)\right] = \sum_{\bfk \in \nabla_{E,n}} u\left(\frac{\bfk}{n}\right)\Q(\bfz,\bfk)
\to u(\bfz), \qquad n\to\infty,
\]
uniformly in $\bfz \in \gD_E$. 
To use this result we multiply both sides of \eqref{eq:inversion} by $u(\bfn/n)$. If $u$ is a function such that $u(\bfk/n) = 0$ if $\bfk \neq \bfn$ then the left-hand side of the resulting equation is
\begin{align*}
\bbE\left[u(\bfn/n)\Q(\bfX_t,\bfn)|\, \bfX_0 = \bfx\right] &= \bbE\left[\bbE[u(\bfN/n)|\, \bfX_t]|\, \bfX_0 = \bfx\right]\\
&\to \bbE[u(\bfX_t)|\bfX_0 = \bfx]
\end{align*}
as $n\to\infty$, where $\bfN \sim \text{Multinomial}(n,\bfX_t)$ and the interchange of limit and integral is justified by \citet{bar:etal:2000}. 
Similarly, 
\[
\sum_{\bfk\in\nabla_{E,n}}\binom{n}{\bfn}m(\bfn) u\left(\frac{\bfk}{n}\right) \to \bbE[u(\bfX_\infty)], \qquad n\to\infty.
\]
These arguments can be shown still to hold if $u$ is replaced by a delta function at $\bfy$ \citep{bar:etal:2000}, and then $\bbE[u(\bfX_t)|\, \bfX_0 = \bfx] = f(\bfx,\bfy;t)$ and $\bbE[u(\bfX_\infty)] = \pi(\bfy)$. Put all this together 
and let $n\to\infty$ to yield \eqref{eq:transition}.
\end{proof}
Equation \eqref{eq:transition} has an intuitive interpretation via Bayes' theorem (\fref{fig:transition}), similar to the one given in \sref{sec:warmup}. The conditional density of $\bfy| \bfx$ is proportional to its prior density $\pi(\bfy)$ times the conditional density of $\bfx| \bfy$. The information that $\bfy$ transfers to the conditional density of $\bfx$ flows through the dual process $\widetilde{\bfL}$, which evolves back from an initial state $\bfy$ to a state $\bfl$ a time $t$ ago (with probability $p_{\bfy\bfl}(t)$). 
Given $\widetilde{\bfL}_t = \bfl$, the density of $\bfx$ is proportional to the likelihood of the type configuration associated with $\bfl$ given $\bfx$ (contributing the multinomial term). The normalisation of this conditional likelihood is the marginal likelihood $\widetilde{m}(\bfl)$ of $\bfl$ when $\bfx$ is integrated over its (prior) stationary distribution.

\begin{figure}[t]
\begin{center}
\includegraphics[width=0.98\textwidth]{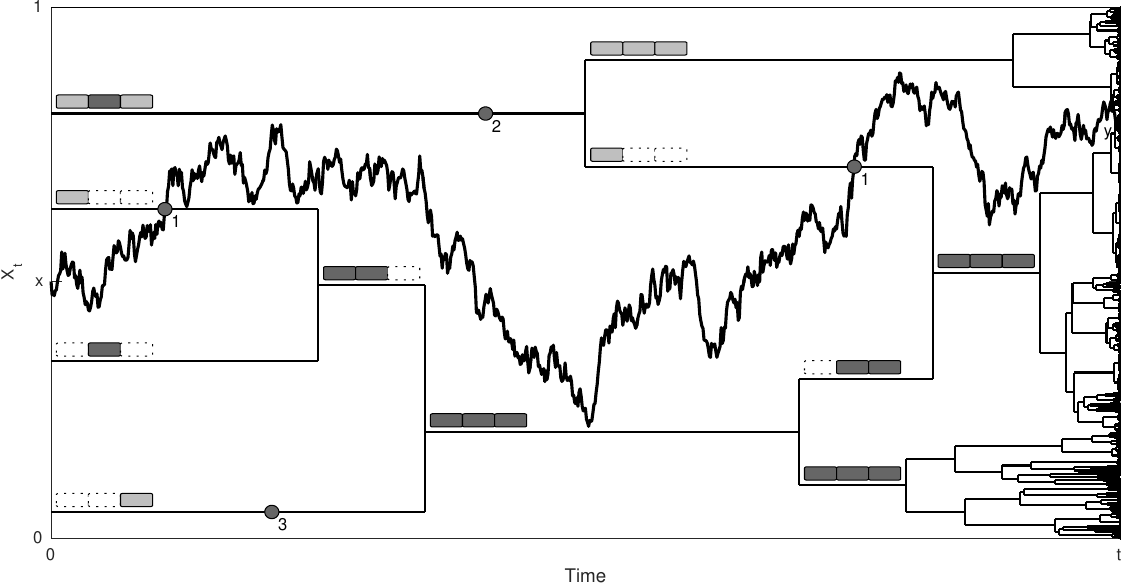}
\end{center}
\caption{\label{fig:transition}Illustration of the transition density in an $L = 3$ locus model, with two alleles at each locus (shown in dark and light grey). The diffusion $\bfX$ evolves from $\bfX_0 = \bfx$ to $\bfX_t = \bfy$ (only the first co-ordinate is plotted). The dual jump process, $\widetilde{\bfL}$, shown here as a typed {\sc arg}, evolves back in time from infinitely many lineages at time $t$ with configuration $\widetilde{\bfL}_0 = \bfy$ to a configuration $\widetilde{\bfL}_t$ of size 4 at time 0 (note the time index now runs backwards). The haplotype associated with each lineage is shown as three shaded segments, and non-ancestral loci are shown with a dotted outline. Mutations in the graph are shown as circles and are labelled by the locus they affect. Denoting the dark and light alleles by 0 and 1 respectively, the four types $(\bfi,A)$ of $\widetilde{\bfL}_t$ are, from top to bottom, $((1,0,1), \{1,2,3\}), ((1), \{1\}), ((0), \{2\}), ((1),\{3\})$.}
\end{figure}

\begin{rem}
For reversible diffusions one can obtain a version of the transition density more flexible than \eqref{eq:transition}, expressed in terms of $p_{\bfx\bfl}(t)$ rather than $p_{\bfy\bfl}(t)$. Despite the interchange of $\bfx$ and $\bfy$, it is still possible to interpret the alternative form for the transition density in terms of a dual process running backwards in time \citep{don:tav:1987, eth:gri:2009}. However, the Wright-Fisher diffusion with recombination is not reversible \citep{han:2002}.
\end{rem}
\begin{rem}
The existence of $p_{\bfy\bfl}(t)$ in a model incorporating selection rather than recombination is proven rigorously by \citet{bar:etal:2000}. It may be possible to adapt their approach here; we leave this for future work.
\end{rem}

\section{A continuous model}
\label{sec:continuous}
Before studying the $L$-locus model further, we illustrate how the above strategy can also be applied to a continuous model of recombination. For this to make sense the mutation model should also be continuous, and an appropriate choice is the infinitely-many-sites model. One way to achieve the appropriate duality result is first to write down the relevant diffusion model and then to pursue the strategy above, for example by recasting it as a Fleming-Viot measure-valued diffusion along the lines of \citet{eth:gri:1987}. Here we take a more direct approach by taking the formal limit in the $L$-locus model as $L\to\infty$. To take this limit painlessly we will reformulate our $L$-locus model somewhat.

First consider a representation for the continuous limit. Here a chromosome is idealised as the interval $[0,1]$, and the model is specified by two probability measures on $[0,1]$, which we assume to admit densities $\eta$ and $\nu$ with respect to Lebesgue measure, respectively modelling the distribution of mutation and recombination events along a chromosome. (The usual infinitely-many-sites model of mutation is recovered by letting $\eta(x) \equiv 1$. This is also a typical choice for $\nu$.) A haplotype in this model can be specified by a set $\bfxi \subseteq [0,1]$ of positions at which it differs from some reference haplotype. If the reference haplotype is chosen to be that of the grand most recent common ancestor of a sample of $n$ haplotypes, then $|\bfxi|$ is finite \citep{gri:mar:1997}. 
The state space for this model is
\[
\Xi_{[0,1],n} := \left\{\bfn = (n_{\bfi}^A)_{\emptyset\neq A\subseteq [0,1],\bfxi\subseteq A}: |\bfxi| < \infty, n_{\bfxi}^A \in \bbN,\sum_{\emptyset\neq A\subseteq [0,1]}\sum_{\bfxi\subseteq A} n_{\bfxi}^A = n\right\},
\]
with each $A$ Borel measurable.

We embed the $L$-locus model 
in this continuous description by the mapping $[L] \mapsto \left\{\frac{1}{L},\frac{2}{L},\dots, 1\right\}$. Then a mutation at locus $l$, or a recombination between locus $l$ and $l+1$, occurs at position $l/L$, and we choose
\[
\begin{array}{cccc}
E_l = \{1,2\}, & \bfP^{(l)} = \left(\begin{matrix} 0 & 1 \\ 1 & 0 \end{matrix}\right), & \ds\gq_l = \gq\int_{\frac{l-1}{L}}^{\frac{l}{L}} \nu(x)\ud x, & \ds\gr_l = \gr\int_{\frac{l-1}{L}}^{\frac{l}{L}} \eta(x) \ud x,
\end{array}
\]
for each $l\in [L]$. In \ref{app:continuous} we show that if we let $L\to\infty$ then this embedding recovers a well-defined limiting process for the dual, with state space $\Xi_{[0,1],n}$, and with a mixture of diffuse and atomic jump kernels. It can be described as follows. Given that the process is currently in state $\bfn \in \Xi_{[0,1],n}$:
\begin{description}
\item[Coalescence.] For each $A,B\subseteq [0,1]$ and $\bfxi\subseteq A\cup B$, the process jumps to $\bfn-\bfe_{\bfxi}^A-\bfe_{\bfxi}^B+\bfe_{\bfxi}^{A\cup B}$ at rate
\[
\frac{1}{2} n_{\bfxi}^{A}(n_{\bfxi}^B - \gd_{AB})\frac{\widetilde{m}(\bfn-\bfe_{\bfxi}^A-\bfe_{\bfxi}^B+\bfe_{\bfxi}^{A\cup B})}{\widetilde{m}(\bfn)}.
\]
\item[Mutation.] For each $A \subseteq [0,1]$ and $\bfxi \subseteq A$, the process jumps at rate
\begin{equation}
\label{eq:mutation}
\frac{\gq}{2} n_{\bfxi}^A \int_{A} \frac{\widetilde{m}(\bfn-\bfe_{\bfxi}^A+\bfe_{\bar{\bfxi}(x)}^{A})}{\widetilde{m}(\bfn)} \eta(x) \ud x,
\end{equation}
where
\begin{equation}
\label{eq:barxi}
\bar{\bfxi}(x) = \begin{cases}
 \bfxi \setminus \left\{x\right\} & \text{if } x \in \bfxi,\\[10pt]
\bfxi \cup \left\{x\right\} & \text{if }x \notin \bfxi.
\end{cases}
\end{equation}
The resulting state is $\bfn-\bfe_{\bfxi}^A+\bfe_{\bfxi \setminus \{x\}}$, where the position $x \in \bfxi$ is chosen by the probability distribution proportional to $\frac{\widetilde{m}(\bfn-\bfe_{\bfxi}^A+\bfe_{\bfxi \setminus \{x\}}^{A})}{\widetilde{m}(\bfn)} \eta(x)\ud x$.
\item[Recombination.] For each $A \subseteq [0,1]$ and $\bfxi \subseteq A$, the process jumps at rate 
\[
\frac{\gr}{2} n_{\bfxi}^A \int_{\inf A}^{\sup A} \frac{\widetilde{m}(\bfn-\bfe_{\bfxi}^A+\bfe_{\bfxi}^{\Alx}+\bfe_{\bfxi}^{\Agx})}{\widetilde{m}(\bfn)} \nu(x) \ud x,
\]
where $\Alx = A \cap [0,x]$ and $\Agx = A \cap (x,1]$. The resulting state is $\bfn-\bfe_{\bfxi}^A+\bfe_{\bfxi}^{\Alx}+\bfe_{\bfxi}^{\Agx}$, with $x \in [\inf A, \sup A]$ chosen by the probability distribution proportional to $\frac{\widetilde{m}(\bfn-\bfe_{\bfxi}^A+\bfe_{\bfxi}^{\Alx}+\bfe_{\bfxi}^{\Agx})}{\widetilde{m}(\bfn)} \nu(x)\ud x$.
\end{description}

In this description, $\widetilde{m}(\cdot)$ is the limit as $L\to\infty$ of \eqref{eq:moments}, in a sense made more precise in \ref{app:continuous}. Since $|\bfxi|$ is finite, the jump distribution due to mutation has finite support. As is shown in \ref{app:continuous}, it is further concentrated on transitions to states of the form $\widehat{\bfn} = \bfn-\bfe_{\bfxi}^A+\bfe_{\bfxi \setminus \{x\}}^{A}$ such that $x \notin \bfmath{\zeta}$ for any $\bfmath{\zeta}$ and $B$ with $\widehat{n}_{\bfmath{\zeta}}^B > 0$ (i.e.\ if a mutation occurs at $x$ then in the resulting configuration no haplotype carries the mutant allele at site $x$---the process obeys the infinitely-many-sites assumption).


\section{The case of no mutation}
\label{sec:nomutation}
As noted in the Introduction, it is possible to make further progress in the absence of mutation. Here we study in further detail the (reduced) $L$-locus model with $\gq = 0$. 
One must take care; the diffusion is no longer ergodic and the stationary distribution is not unique. In fact any distribution placing all its mass at $\gd_\bfj$ for some $\bfj \in E$ is an invariant distribution for $\bfX$; one haplotype $\bfj$ ultimately becomes fixed in the population, and once the diffusion hits this state it stays there. Nevertheless, for each invariant distribution we can find a non-trivial dual process. Here we adapt the results of \sref{sec:reduced}. In order to normalise the duality function of \eqref{eq:F3} with respect to $\bfX_\infty \sim \gd_\bfj$, it is clear that $n_\bfi^A$ can be nonzero only if $\bfi = \bfj|_A$, and then \eqref{eq:F3} simplifies to
\[
\widetilde{F}(\bfx,\bfn) = \prod_{\emptyset\neq A\subseteq [L]} (x_{\bfj}^{A})^{n^A}.
\]
From this one immediately obtains the transition rates of the dual process:
\begin{description}
\item[Coalescence.] For each nonempty $A,B\subseteq [L]$, the process jumps to $\bfn-\bfe_\bfj^A - \bfe_\bfj^B + \bfe_\bfj^{A\cup B}$ at rate
\[
\frac{1}{2}n^A(n^B - \gd_{AB}).
\]
\item[Recombination.] For each nonempty $A\subseteq [L]$ and $l= \min A,\dots, \max A -1$, the process jumps to $\bfn - \bfe_\bfj^A +\bfe_\bfj^{\Al} + \bfe_\bfj^{\Ag}$ at rate
\[
\frac{1}{2}n^A\gr_l.
\]
\end{description}
The state space is $\{\bfn \in \Xi_{E,n} : n_{\bfi}^A = 0 \text{ if } \bfi \neq \bfj|_{A}\}$. This process describes the way that ancestral material is dispersed across the ancestors of a sample. It is the number of lineages in a (reduced, $L$-locus) {\sc arg}. For $L=2$, the dynamics of this process are studied by, for example, \citet{gri:1991} and \citet{sim:chu:1997}. Note that the degree of $\bfn$ is non-increasing, and, assuming that each locus is represented at least once in the initial sample, the process reaches a stationary state with support $\{\text{degree}(\bfn) = L\}$ (each locus has precisely one ancestor), with $\{n = 1\}$ a recurrent set (one individual is simultaneously ancestral at all loci). Starting from a single individual, ancestral material fragments back in time across many different individuals, before almost surely reconvening again within a single ancestor. \citet{ess:etal:2016} call this the \emph{partitioning process} in the context of the Moran model. In the same context, \citet{bob:etal:2010} study its rate of convergence to stationarity and provide a computer program to compute its transient distribution. \citet{wiu:hei:1997} study the process in the context of the continuous model of \sref{sec:continuous}, where they use it to address the question of how many genetic ancestors there are to a contemporary human chromosome.


It is convenient to denote the partitions directly. That is, if $\widetilde{\bfL}$ evolves as a partitioning process (with $\text{degree}(\widetilde{\bfL}_0) = L$), then let $\Theta_t = \{A\subseteq [L]: \widetilde{L}_t^A = 1\}$. Further writing
\[
x_\bfj^{\Theta} = \prod_{A \in \Theta} x_\bfj^A,
\]
for a partition $\Theta$, the duality equation can be written concisely as
\begin{equation}
\label{eq:partitiondual}
\bbE\left[ (X_{\bfj}^{\Phi})_t\mid \bfX_0 = \bfx\right] = \bbE\left[x_{\bfj}^{\Theta_t}\mid \Theta_0 = \Phi\right].
\end{equation}
It relates two particularly important quantities. Expectation on the left-hand side is with respect to $\bfX$ evolving forward in time according to \eqref{eq:WFGeneratorR} (with $\gq = 0$). The left-hand side is therefore a transient moment of the Wright-Fisher diffusion involving combinations of the alleles comprising the haplotype $\bfj$, where the combinations of interest are specified by a partition $\Phi$. Expectation on the right-hand side is with respect to $\bfTheta = (\Theta_t)_{t\geq 0}$ evolving backward in time from $\Phi$. The right-hand side is therefore the {\sc pgf} for the configuration of lineages in a reduced {\sc arg}. 
\citet{man:2013} uses the relationship between these quantities to find, among other things, the probability distribution of $\Theta_t$ for $L = 2$. Via a change of co-ordinate system, \citet{ess:etal:2016} find the distribution of $\Theta_t$ for $L = 3$. 

Letting $t\to\infty$ in \eqref{eq:partitiondual} is also instructive. We find
\begin{equation}
\label{eq:partitionstationary}
\bbE\left[ (X_{\bfj}^{\Phi})_\infty\mid \bfX_0 = \bfx\right] = \bbE\left[x_{\bfj}^{\Theta_\infty}\mid \Theta_0 = \Phi\right].
\end{equation}
The left-hand side of \eqref{eq:partitionstationary} is
\begin{align*}
\bbE\left[ (X_{\bfj}^{\Phi})_\infty\mid \bfX_0 = \bfx\right] &= \bbP[(X^A_\bfj)_\infty = 1, \forall A \in \Phi\mid \bfX_0 = \bfx]\\
 &= \bbP[\bfX_\infty = \bfe_\bfj\mid \bfX_0 = \bfx], 
\end{align*}
the probability that the haplotype $\bfj$ ultimately fixes in the population, starting from initial frequencies $\bfx$. The right-hand side of \eqref{eq:partitionstationary} is the {\sc pgf} of $\Theta_\infty$, the stationary distribution of the partitioning process. Notice that both sides of \eqref{eq:partitionstationary} are independent of $\Phi$. Notice also that, although the left-hand side is conditioned on the initial frequencies $\bfx$ of all haplotypes, it is only terms of the form $x_\bfj^A$ which are needed---the marginal frequency of haplotypes agreeing with $\bfj$ at a subset $A$ of loci. Frequencies of alleles not appearing in $\bfj$ are immaterial (except through their aggregate frequency, which is expressible in terms of $x_\bfj^A$). Thus, for the purpose of computing \eqref{eq:partitionstationary}, at each given locus $l$ one could aggregate all alleles not equal to $j_l$ and treat them as a single type with frequency $1-x_{j_l}^{\{l\}}$.

The above reasoning motivates our interest in $\Theta_\infty$ in providing multilocus fixation probabilities. Let us spell this out further. First note that the fixation probability can be expressed as
\[
\bbP[\bfX_\infty = \bfe_\bfj\mid \bfX_0 = \bfx] = \sum_\Phi \bbF(\Phi)x_\bfj^\Phi,
\]
where $\bbF(\Phi)$ is the probability that there are $|\Phi|$ single individuals whose descendents cause the haplotype $\bfj$ to fix according to the partition $\Phi$; that is, if $\phi_k$ is the $k$th block of $\Phi$ then the $k$th of the $|\Phi|$ individuals is the ancestor to the whole population at the loci in $\phi_k$, and this individual has haplotype in agreement with $\bfj$ at these loci. Writing out both sides of \eqref{eq:partitionstationary},
\begin{equation}
\label{eq:fixsum}
\sum_{\Phi}\bbF(\Phi)x_\bfj^\Phi = \sum_{\Phi} \bbP(\Theta_\infty = \Phi)x_\bfj^\Phi,
\end{equation}
and therefore
\begin{equation}
\label{eq:fix}
\bbF(\Phi) = \bbP(\Theta_\infty = \Phi).
\end{equation}
We emphasise that \eqref{eq:fix} is a nice consequence of duality. In words, the stationary probability that the ancestors of the population partition the loci according to $\Phi$ is equal to the probability that $|\Phi|$ individuals fix according to the partition $\Phi$. This argument could be extended to a continuous model of a gene as in \sref{sec:continuous}, in which case $\Phi$ is a partition of $[0,1]$.

Consider as a simple example the case of $L=2$ loci. There are two possible partitions, $\{\{1,2\}\}$ and $\{\{1\},\{2\}\}$. Numbering these states as 1 and 2, the transition rate matrix 
of $\bfTheta$ is 
\[
\widetilde{\bfQ} = \begin{pmatrix}
-\rho_1/2 & \rho_1/2\\
1 & -1
\end{pmatrix}.
\]
The distribution of $\Theta_\infty$ is the unit solution $\bfpi$ to $\bfpi\widetilde{\bfQ} = \bfzero$, which is easily verified to be
\[
\bfpi = \left(\frac{2}{2+\rho_1},\frac{\rho_1}{2+\rho_1}\right).
\]
The right-hand side of \eqref{eq:partitionstationary} is
\[
\frac{2}{2+\rho_1}x_\bfj + \frac{\rho_1}{2+\rho_1}x_{\bfj}^{\{1\}}x_{\bfj}^{\{2\}},
\]
and by duality this is the probability of fixation of $\bfj$ when initial frequencies are $\bfx$. If the population is initially at linkage equilibrium, so that $x_\bfj = x_\bfj^{\{1\}}x_\bfj^{\{2\}}$, then \eqref{eq:partitiondual} becomes
\[
\bbE[(X_\bfj^\Phi)_t\mid \bfX_0 = \bfx] = x_{\bfj}^{\{1\}}x_{\bfj}^{\{2\}},
\]
because $x_\bfj^{\Theta_t} = x_\bfj^{\{1\}}x_\bfj^{\{2\}}$ for all $\Theta_t$. This agrees with our intuition that fixation probabilities of the two loci are independent when the initial state is one of linkage equilibrium. Of course, a similar statement can be made for more than two loci.

The stationary distribution of $\Theta$ for $L = 3$ loci is given by \citet{wiu:hei:1997}, and its transient dynamics are studied by \citet{ess:etal:2016}, who also found an analogue of \eqref{eq:fix} for a two-locus Moran model.

\subsection{The stationary distribution of the partitioning process}
While the stationary distribution $\bfpi$ of $\Theta_\infty$ is of interest, solving $\bfpi\widetilde{\bfQ} = \bfzero$ may not be straightforward because the size of this linear system grows rapidly with $L$. 
More precisely, the state space for $\Theta_t$ is the set of partitions of $[L]$. The number of such partitions is $B_L$, the $L$th Bell number, which grows at least exponentially with $L$. In this subsection we show how one can compute the stationary distribution of $\Theta_\infty$ by solving a much smaller system, provided one has already computed the corresponding solution for an $(L-1)$-locus system. In this subsection we will use the superscript $(L)$ to denote the dependence on $L$.

The key idea is to consider the collection of indicators $\bfep^{(L)} := (\ge_{ij})_{i,j\in [L]}$ defined by
\[
\ge_{ij} = \begin{cases}
1 & \text{if $i$ and $j$ are in the same block of $\Theta_\infty$,} \\
0 & \text{otherwise}.
\end{cases}
\]
Then $\bfpi^{(L)}$ is expressible as a vector of joint moments of $\bfep^{(L)}$. For example, if $L=2$ then $\bfpi^{(2)} = \bbE(\ge_{12}, 1 - \ge_{12})$. If $L=3$ then
\begin{multline}
\label{eq:3locus}
{\bfpi^{(3)}}' = \begin{pmatrix}
\bbP(\Theta_\infty = \{\{1,2,3\}\})\\
\bbP(\Theta_\infty = \{\{1,2\},\{3\}\})\\
\bbP(\Theta_\infty = \{\{1,3\},\{2\}\})\\
\bbP(\Theta_\infty = \{\{1\},\{2,3\}\})\\
\bbP(\Theta_\infty = \{\{1\},\{2\},\{3\}\})
\end{pmatrix}
= \bbE
\begin{pmatrix}
\ge_{12}\ge_{23}\\
\ge_{12}(1-\ge_{23})\\
\ge_{13}(1-\ge_{12})\\
(1-\ge_{12})\ge_{23}\\
(1-\ge_{12})(1-\ge_{13})(1-\ge_{23})
\end{pmatrix}\\
= \bbE
\begin{pmatrix}
\ge_{12}\ge_{23}\\
\ge_{12} - \ge_{12}\ge_{23}\\
\ge_{13} - \ge_{12}\ge_{23}\\
\ge_{23} - \ge_{12}\ge_{23}\\
1-\ge_{12} - \ge_{13} - \ge_{23} + 2\ge_{12}\ge_{23}
\end{pmatrix}.
\end{multline}
Some of the terms on the right-hand side of \eqref{eq:3locus} are known from the two-locus solution:
\begin{equation}
\label{eq:2locus}
\begin{array}{ccc}
\ds\bbE[\ge_{12}] = \frac{2}{2 + \gr_1}, & \ds\bbE[\ge_{23}] = \frac{2}{2+\gr_2}, & \ds\bbE[\ge_{13}] = \frac{2}{2+\gr_1 + \gr_2}.
\end{array}
\end{equation}
Substituting these results into $\bfpi^{(3)}\widetilde{\bfQ}^{(3)} = \bfzero$, the number of unknowns is reduced from $B_3 = 5$ down to just one, $\bbE[\ge_{12}\ge_{23}]$.

This idea extends to $L$ loci. Suppose we have found $\bfpi^{(L-1)}$; then we know all required joint moments of $\bfep^{(L-1)}$. The sequence $(\bfep^{(L)})_{L=1,2,\dots}$ has an important consistency property: the marginal joint moments of $\bfep^{(L)}$ involving only the indices $1,2,\dots,L-1$ coincide with those of $\bfep^{(L-1)}$. Furthermore, by rescaling the recombination rate across any missing loci, we also know all the necessary joint moments of $\bfep^{(L)}$ involving indices with at most $L-1$ distinct entries in $1,2,\dots,L$. For example, by ``forgetting'' locus 2 we obtain $\bbE[\ge_{13}]$ in \eqref{eq:2locus} by treating loci 1 and 3 as conforming to a two-locus model with recombination parameter $(\gr_1 + \gr_2)/2$. After exploiting this consistency property, the number of remaining unknown terms in $\bfpi^{(L)}\widetilde{\bfQ}^{(L)} = \bfzero$ is, we claim, equal to
\begin{equation}
\label{eq:singleton}
S_L := (-1)^L + \sum_{k=1}^{L}(-1)^{k-1}B_{L-k}.
\end{equation}
To see this, note that each unknown moment is of the form $\bbE[\ge_{i_1j_1}\ge_{i_2j_2}\cdots \ge_{i_dj_d}]$ in which each index $1,2,\dots,L$ appears at least once (otherwise we could appeal to the $(L-1)$-locus solution). Since each index is represented at least once, $\ge_{i_1j_1}\ge_{i_2j_2}\cdots \ge_{i_dj_d}$ defines a partition on $[L]$; that is, $\bbE[\ge_{i_1j_1}\ge_{i_2j_2}\cdots \ge_{i_dj_d}]$ corresponds uniquely to one entry in $\bfpi^{(L)}$ (for example, when $L=3$ we see from \eqref{eq:3locus} that $\bbE[\ge_{12}\ge_{23}]$ is the first entry of $\bfpi^{(3)}$). Moreover, this partition contains no singleton blocks, because any index $i_k$ is paired in a block with some $j_k$. Thus, the number of unknown moments is equal to the number of partitions of $[L]$ containing no singleton blocks, which is given by \eqref{eq:singleton} \citep[A000296 of][and references therein]{oeis:2011}. By substituting known results from the $(L-1)$-locus solution for $\bfep^{(L-1)}$ into $\bfpi^{(L)}\widetilde{\bfQ}^{(L)} = \bfzero$ written in terms of moments of $\bfep^{(L)}$, the system is reduced from $B_L$ to $S_L$ equations, though $S_L$ still exhibits exponential growth in $L$. The first few of these numbers are given in \tref{tab:bell}.

\begin{table}[t]
\caption{\label{tab:bell} The number $B_L$ of partitions of $[L]$, and the number $S_L$ of partitions of $[L]$ containing no singleton blocks.}
\begin{center}
\begin{tabular}{rrr}
\hline
$L$ & $B_L$ & $S_L$\\
\hline 
1 &     1 &     0 \\
2 &     2 &     1 \\
3 &     5 &     1 \\
4 &    15 &     4 \\
5 &    52 &    11 \\
6 &   203 &    41 \\
7 &   877 &   162 \\
8 &  4140 &   715 \\
9 & 21147 &  3425 \\
10 &     115975 & 17722\\\hline
\end{tabular}
\end{center}
\end{table}

The above argument allows for the efficient computation of $\bfpi^{(L)}$ successively for each $L$. The stationary distribution $\bfpi^{(L)}$ is shown in \fref{fig:fragpiL} for $L=1,2,\dots,6$, summarised by the stationary number of blocks $|\Theta_\infty|$ of $\Theta_\infty$. The complete solution for $\bfpi^{(6)}$ is plotted in \fref{fig:fragpi6} for a symmetric recombination model with $\gr_1 = \gr_2 = \dots = \gr_5$. (Interestingly, the mode of $\bfpi^{(6)}$ appears to be either $\{\{1,2,3,4,5,6\}\}$ or $\{\{1\},\{2\},\{3\},\{4\},\{5\},\{6\}\}$ for any value of $\gr_l$.) We note that these observations are consistent with similar ones made by \citet[Section 4.1]{bob:etal:2010}, who investigated $\Theta_\infty$ for a discrete-time Moran model by numerically iterating the partitioning process over generations until convergence to a chosen precision.

Duality tells us that fixation probabilities can be obtained as certain linear combinations of the curves in \fref{fig:fragpi6}. For example, suppose the population is fixed for a wild-type allele at each of the six loci. At each locus a mutant appears on the wild-type background and its haplotype drifts to frequency $1/6$ (this might be thought of as a haplotype frequency configuration of maximal Hill-Robertson-type interference, though here everything is neutral). What is the probability that all six mutant alleles ultimately fix? Letting $\bfj$ denote the haplotype comprised of all six mutant alleles, from \eqref{eq:fixsum} the only partition $\Phi$ for which $x_{\bfj}^\Phi$ is nonzero is $\Phi = \{\{1\},\{2\},\{3\},\{4\},\{5\},\{6\}\}$. Thus, \eqref{eq:fixsum} tells us that the fixation probability for $\bfj$ is given by the stationary probability of $\Phi$ (the dashed line in \fref{fig:fragpi6}) times $x_\bfj^\Phi = \left(\frac{1}{6}\right)^6$. So in this example, the dashed curve in \fref{fig:fragpi6} also provides the fixation probability of $\bfj$ relative to the completely unlinked case, $\gr_l = \infty$.

\section{Discussion}
\label{sec:discussion}
This paper makes three main contributions. First, we constructed the first duality relationships for population genetics models involving all of genetic drift, mutation, and recombination. They make precise the link between two individually well studied objects; namely, the Wright-Fisher diffusion with recombination and the {\sc arg}. This is done first for a discrete model of recombination and mutation and later on for a continuous limit model. Second, we emphasise the methods underlying our approach: it is particularly algebraically efficient to express the duality of two processes through their infinitesimal generators and to apply those generators to appropriate \emph{generating functions}. Furthermore, this method is fairly automatic and avoids the pitfalls of the probabilistic arguments that are often invoked to address these types of questions. The price for this, one might argue, is that a biological interpretation of the results may be obscured. In this paper we have attempted to spell out how such biological interpretations can be recovered, by distilling mathematical expressions where possible to simple interpretable statements about conditional evolution. Third, we have highlighted the usefulness of our results via two applications: we obtained an expression for the transition function of the diffusion, and we showed how the partitioning process that arises when mutation is ignored can be related to predictions for haplotype fixation probabilities.

\begin{figure}[t]
\begin{tabular}{cc}
(a) $\rho_l = 5$ & (b) $\rho = 5$\\
\includegraphics[width=0.475\textwidth]{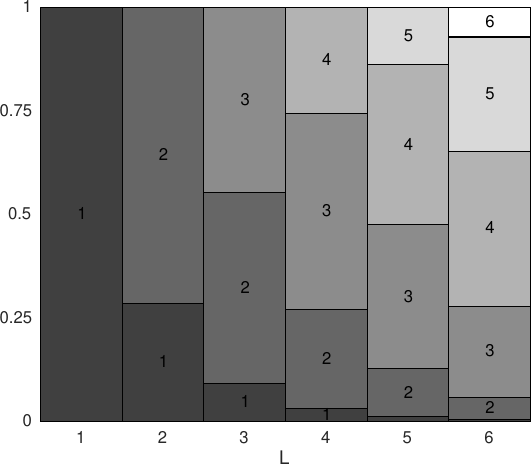} &
\includegraphics[width=0.475\textwidth]{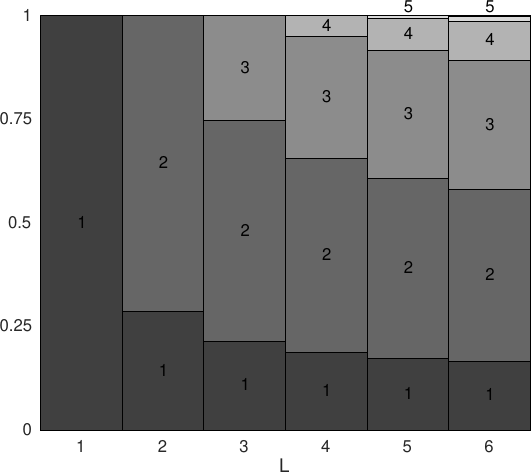} 
\end{tabular}
\caption{\label{fig:fragpiL}Stationary distribution of the number of fragments, $|\Theta_\infty|$, in an $L$-locus model. (a): Fixed per-locus recombination rate, $\rho_l = 5$. (b): Fixed total recombination rate, $\rho = \sum_{l=1}^{L-1}\rho_l = 5$.}
\end{figure}

\begin{figure}[p]
  \centering
  \subfigure[]{
    \includegraphics[width=\textwidth]{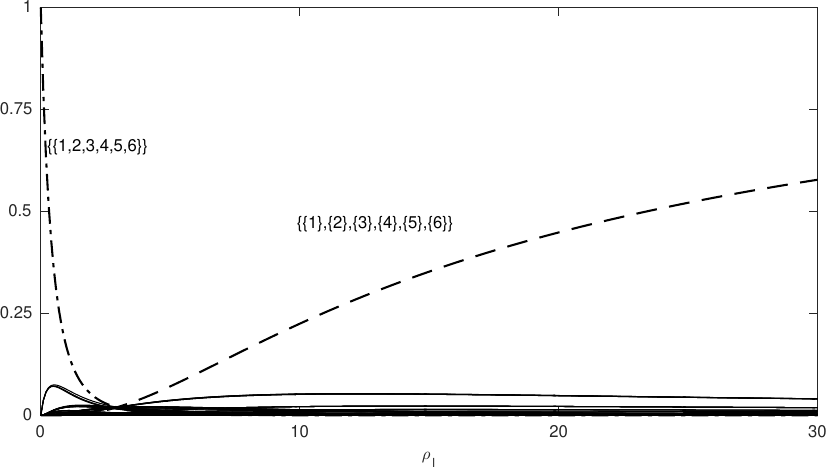}
  }
  \subfigure[]{
    \includegraphics[width=\textwidth]{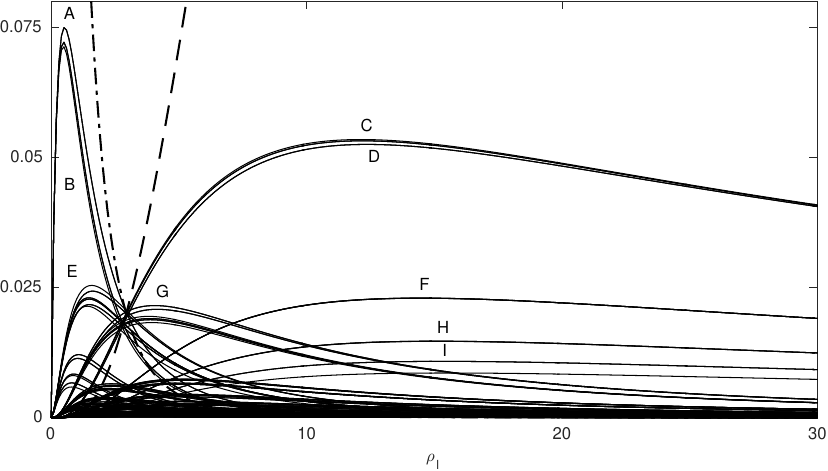}
  }
\end{figure}

\begin{figure}[p]
  \centering
     \subfigure[]{
   \begin{tabular}{cl}
\hline
A & $\{\{1,2,3,4,5\},\{6\}\}$, $\{\{1\},\{2,3,4,5,6\}\}$.\\
B & $\{\{1,2,3,4\},\{5,6\}\}$, $\{\{1,2\},\{3,4,5,6\}\}$, $\{\{1,2,3\},\{4,5,6\}\}$.\\
C & $\{\{1\},\{2\},\{3,4\},\{5\},\{6\}\}$, $\{\{1\},\{2,3\},\{4\},\{5\},\{6\}\}$, \\ &$\{\{1\},\{2\},\{3\},\{4,5\},\{6\}\}$.\\
D & $\{\{1,2\},\{3\},\{4\},\{5\},\{6\}\}$, $\{\{1\},\{2\},\{3\},\{4\},\{5,6\}\}$.\\
E & $\{\{1\},\{2,3,4,5\},\{6\}\}$.\\
F & $\{\{1\},\{2\},\{3\},\{4,6\},\{5\}\}$, $\{\{1,3\},\{2\},\{4\},\{5\},\{6\}\}$, \\ &$\{\{1\},\{2,4\},\{3\},\{5\},\{6\}\}$, $\{\{1\},\{2\},\{3,5\},\{4\},\{6\}\}$.\\
G & $\{\{1\},\{2,3,4\},\{5\},\{6\}\}$, $\{\{1\},\{2\},\{3,4,5\},\{6\}\}$.\\
H & $\{\{1,4\},\{2\},\{3\},\{5\},\{6\}\}$, $\{\{1\},\{2\},\{3,6\},\{4\},\{5\}\}$,\\ & $\{\{1\},\{2,5\},\{3\},\{4\},\{6\}\}$.\\
I & $\{\{1,5\},\{2\},\{3\},\{4\},\{6\}\}$, $\{\{1\},\{2,6\},\{3\},\{4\},\{5\}\}$.\\\hline
\end{tabular}
  }
    \subfigure[]{
    \includegraphics[width=\textwidth]{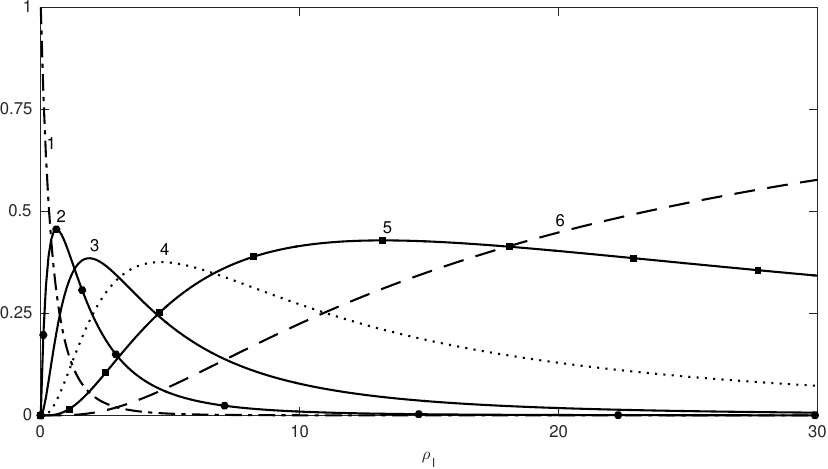}
  }
      \caption{\label{fig:fragpi6}(a): Stationary fragment distribution, $\bfpi^{(L)}$, of $\Theta_\infty$ for an $L=6$ locus model with recombination parameter $\rho_l$ at each breakpoint. (b): A detailed region of (a), with (c): a selection of partitions annotated. (d): The stationary distribution of the number of fragments, $|\Theta_\infty|$.}
\end{figure}

\section*{Acknowledgements}
This work was supported in part by an Engineering \& Physical Sciences Research Council grant to P.A.J.\ (EP/L018497/1). Part of this work was carried out while P.A.J.\ was at the University of California, Berkeley, supported in part by NIH Grant R01-GM094402, and while R.C.G.\ was visiting the D\'epartement de Math\'ematiques et de Statistique at the Universit\'e de Montr\'eal, supported by the Clay Mathematics Institute. He would like to thank his hosts for their hospitality.

\appendix
\section{Useful identitites}
\label{app:identities}
For the function $\widetilde{\Q}(\bfx, \bfn)$ defined by \eqref{eq:Q2} and for $l=\min A,\dots,\max A - 1$, note that
\begin{multline}
x_{\bfi}^{A\cup B}\widetilde{\Q}(\bfx,\bfn - \bfe_{\bfi}^A - \bfe_\bfi^B) = \frac{\binom{n-2}{\bfn - \bfe_\bfi^A - \bfe_\bfi^B}}{\binom{n-1}{\bfn-\bfe_i^A - \bfe_\bfi^B + \bfe_\bfi^{A\cup B}}}\widetilde{\Q}(\bfx,\bfn-\bfe_\bfi^A - \bfe_\bfi^B + \bfe_\bfi^{A\cup B})\\
= \frac{n_{\bfi}^{A\cup B} + 1 -\gd_{A,A\cup B} - \gd_{B,A\cup B}}{n-1}\widetilde{\Q}(\bfx,\bfn-\bfe_\bfi^A - \bfe_\bfi^B + \bfe_\bfi^{A\cup B}), \label{eq:ex1}
\end{multline}
\begin{multline}
x_{\m{\bfi}{l}{j}}\widetilde{\Q}(\bfx,\bfn - \bfe_\bfi^A) = \frac{\binom{n-1}{\bfn-\bfe_\bfi^A}}{\binom{n}{\bfn - \bfe_{\bfi}^A + \bfe_{\m{\bfi}{l}{j}}^A}}\widetilde{\Q}(\bfx,\bfn-\bfe_{\bfi}^A +\bfe_{\m{\bfi}{l}{j}}^A)\\
= \frac{n_{\m{\bfi}{l}{j}}^A+1-\gd_{i_lj}}{n}\widetilde{\Q}(\bfx,\bfn-\bfe_{\bfi}^A +\bfe_{\m{\bfi}{l}{j}}^A),\label{eq:ex2}
\end{multline}
\begin{multline}
x_{\bfi}^{\Al}x_{\bfi}^{\Ag}\widetilde{\Q}(\bfx,\bfn - \bfe_\bfi^A) = \frac{\binom{n-1}{\bfn-\bfe_\bfi^A}}{\binom{n+1}{\bfn - \bfe_\bfi^A + \bfe_\bfi^{\Al} + \bfe_\bfi^{\Ag}}} \widetilde{\Q}(\bfx,\bfn-\bfe_\bfi^A+\bfe_\bfi^{\Al}+\bfe_\bfi^{\Ag})\\
= \frac{(n_\bfi^{\Al}+1)(n_\bfi^{\Ag}+1)}{n(n+1)}\widetilde{\Q}(\bfx,\bfn-\bfe_\bfi^A+\bfe_\bfi^{\Al}+\bfe_\bfi^{\Ag}).\label{eq:ex3}
\end{multline}

\section{The continuous limit}
\label{app:continuous}
In this appendix we show how to recover the continuous dual process described in \sref{sec:continuous} when the $L$-locus model is embedded in it; $E_l$, $\bfP^{(l)}$, $\theta_l$, and $\gr_l$ are defined as in that section, and we let $L\to\infty$. To emphasise the dependence on $L$, in this appendix we will write $\bfn^{(L)}$, $\Xi_{E,n}^{(L)}$, and $\sL^{(L)}$ for $\bfn$, $\Xi_{E,n}$, and $\sL$. In order to identify the limiting behaviour of the process $\widetilde{\bfL}$ of \thmref{thm:reducedFiniteSites}, we proceed by fixing $\bfn \in \Xi_{[0,1],n}$, constructing a sequence $\bfn^{(L)} \in \Xi_{E,n}^{(L)}$ converging to $\bfn$ (in a manner to be defined precisely below), and then seeking the limit of $\sL^{(L)} \widetilde{F}(\bfx,\bfn^{(L)})$ as $L \to \infty$.

To construct a sequence $(\bfn^{(L)})_{L\in\bbN}$ converging to some $\bfn \in \Xi_{[0,1],n}$, we define $\bfn^{(L)}$ as: 
\begin{equation}
\label{eq:sampleconvergence}
n_{\bfxi^{(L)}}^{A^{(L)}} = \sum_{\substack{A\subseteq [0,1]:\\ A^{(L)} = LA \cap [L]}} \,\sum_{\substack{\bfxi \subseteq A:\, |\bfxi| < \infty,\\ \xi_i^{(L)}L = \lceil\xi_iL\rceil,\, i=0,1,\dots}} n_{\bfxi}^A.
\end{equation}
Equation \eqref{eq:sampleconvergence} defines an obvious `coarsening' for representing a sample from the continuous model in its $L$-locus counterpart: the position of each mutant site is rounded up to the nearest multiple of $\frac{1}{L}$, and the segment $A$ over which a haplotype is ancestral is represented by the collection $\{l \in [L]: \frac{l}{L} \in A\} =: A^{(L)}$. Given a sample $\bfn$, for sufficiently large $L$ we have 
\begin{equation}
\label{eq:virtue}
n^{A^{(L)}}_{\bfxi^{(L)}} = n^{A}_{\bfxi}, \qquad \text{for each }\bfxi \subseteq A \subseteq [0,1],
\end{equation}
and we write $\bfn^{(L)} \to \bfn$ as $L\to\infty$. Similarly, we can fix the role of $\bfx$ by choosing $x_{\bfxi^{(L)}}^{A^{(L)}} = x_{\bfxi}^A$ for each $\bfxi$ and $A$ with $n_{\bfxi}^A > 0$.

In this formulation, for sufficiently large $L$ equation \eqref{eq:generatorQ3} becomes: 
{\allowdisplaybreaks
\begin{align}
	\lefteqn{\sL^{(L)} \widetilde{\Q}(\bfx, \bfn^{(L)}) =}\hspace{40pt}&\notag\\
	{}& \frac{1}{2}\sum_{\emptyset\neq A\subseteq [0,1]}\Bigg[\sum_{\emptyset\neq B\subseteq [0,1]}\sum_{\bfxi\subseteq(A\cup B)} n(n_{\bfxi}^{A\cup B} + 1 - \gd_{A,A\cup B} - \gd_{B,A\cup B}) \notag\\
	{}& \specialcell{\hfill \times
	\widetilde{\Q}(\bfx,\bfn-\bfe_{\bfxi}^A-\bfe_{\bfxi}^B+\bfe_{\bfxi}^{A\cup B})} \nonumber\\
	{}& {}+ \gq\sum_{\bfxi \subseteq A} \int_{\bigcup_{l\in A^{(L)}}[\frac{l-1}{L},\frac{l}{L}]} (n_{\bar{\bfxi}(x)}^A + 1)\widetilde{\Q}(\bfx,\bfn-\bfe_{\bfxi}^A+\bfe_{\bar{\bfxi}(x)}^{A}) \eta(x) \ud x \nonumber\\
	 {}& {}+ \gr\sum_{\bfxi\subseteq A}  \int_{\frac{1}{L}(\min A^{(L)} - 1)}^{\frac{1}{L}\max A^{(L)}} \frac{(n_{\bfxi}^{\Alx} + 1)(n_{\bfxi}^{\Agx} + 1)}{n+1}\notag\\
	 {}& \specialcell{\hfill \times\widetilde{\Q}(\bfx,\bfn-\bfe_{\bfxi}^A+\bfe_{\bfxi}^{\Alx}+\bfe_{\bfxi}^{\Agx})\nu(x)\ud x\Bigg]} \nonumber\\
	{}& - \frac{1}{2}\left[n(n-1) + \sum_{A\subseteq [0,1]} n^{A} \left(\gq\int_{\bigcup_{l\in A^{(L)}}[\frac{l-1}{L},\frac{l}{L}]} \eta(x)\ud x  \right.\right.\notag\\
	{}& \specialcell{\hfill \left.\left. {}+ \gr \int_{\frac{1}{L}(\min A^{(L)} - 1)}^{\frac{1}{L}\max A^{(L)}}  \nu(x) \ud x \right)\right]\widetilde{\Q}(\bfx,\bfn),}	\label{eq:generatorQ4}
\end{align}
}
where $\Alx = A \cap [0,x]$, $\Agx = A \cap (x,1]$, and $\bar{\bfxi}(x)$ is given by \eqref{eq:barxi}. [Superscripts illustrating the dependence of $\bfn^{(L)}$ on $L$ 
can be dropped, by virtue of \eqref{eq:virtue}.] We can now take the limit as $L\to\infty$ in \eqref{eq:generatorQ4}; simply replace the range of integration for the mutation terms with $A$, and replace the range of integration for the recombination terms with $[\inf A, \sup A]$. 
In a similar manner, one can reformulate $\sL^{(L)}\widetilde{F}(\bfx,\bfn^{(L)})$ and let $L \to \infty$ to find
{\allowdisplaybreaks
\begin{align}
	\lefteqn{\sL \widetilde{F}(\bfx, \bfn) =}\notag\\
	{}& \frac{1}{2}\sum_{\emptyset\neq A\subseteq [0,1]}\Bigg[\sum_{\emptyset\neq B\subseteq [0,1]}\sum_{\bfxi\subseteq(A\cup B)} n_{\bfxi}^{A}(n_{\bfxi}^B - \gd_{AB})\frac{\widetilde{m}(\bfn-\bfe_{\bfxi}^A-\bfe_{\bfxi}^B+\bfe_{\bfxi}^{A\cup B})}{\widetilde{m}(\bfn)}\notag\\
	{}& \specialcell{\hfill \times
	\widetilde{F}(\bfx,\bfn-\bfe_{\bfxi}^A-\bfe_{\bfxi}^B+\bfe_{\bfxi}^{A\cup B})}\nonumber\\
	{}& {}+ \gq\sum_{\bfxi \subseteq A} n_{\bfxi}^A \int_{A} \frac{\widetilde{m}(\bfn-\bfe_{\bfxi}^A+\bfe_{\bar{\bfxi}(x)}^{A})}{\widetilde{m}(\bfn)} \widetilde{F}(\bfx,\bfn-\bfe_{\bfxi}^A+\bfe_{\bar{\bfxi}(x)}^{A}) \eta(x) \ud x \nonumber\\
	 {}& {}+ \gr\sum_{\bfxi\subseteq A} n_{\bfxi}^A \int_{\inf A}^{\sup A} \frac{\widetilde{m}(\bfn-\bfe_{\bfxi}^A+\bfe_{\bfxi}^{\Alx}+\bfe_{\bfxi}^{\Agx})}{\widetilde{m}(\bfn)} \notag\\
	  {}& \specialcell{\hfill \times\widetilde{F}(\bfx,\bfn-\bfe_{\bfxi}^A+\bfe_{\bfxi}^{\Alx}+\bfe_{\bfxi}^{\Agx})\nu(x) \ud x\Bigg]} \nonumber\\
	{}& - \frac{1}{2}\left[n(n-1)+ \sum_{A\subseteq [0,1]} n^{A} \left(\gq\int_{A} \eta(x)\ud x  + \gr \int_{\inf A}^{\sup A}  \nu(x) \ud x \right)\right]\widetilde{F}(\bfx,\bfn), \label{eq:generatorQ5}
\end{align}
}
where $\widetilde{m}(\widehat{\bfn})/\widetilde{m}(\bfn)$ is defined as the weak limit satisfying
\[
\int_C\frac{\widetilde{m}(\widehat{\bfn})}{\widetilde{m}(\bfn)}\widetilde{F}(\bfx,\widehat{\bfn}) \gl(x) \ud x = \lim_{L\to\infty} \int_C \frac{\widetilde{m}(\widehat{\bfn}^{(L)})}{\widetilde{m}(\bfn^{(L)})}\widetilde{F}(\bfx,\widehat{\bfn}^{(L)})\lambda(x) \ud x,
\]
for $C \subseteq [0,1]$ and $\gl$ a probability density on $[0,1]$. (We refrain from passing the limit through the integral, since in some instances it is necessary to interpret the limit in a Dirac sense; see below.) The interpretation of \eqref{eq:generatorQ5} as the generator of a pure jump Markov process is clear, and the terms corresponding to coalescence and recombination events agree with the description given in \sref{sec:continuous}. The mutation term, however, reads as:
\begin{description}
\item[Mutation.] For each $A \subseteq [0,1]$ and $\bfxi \subseteq A$, the process jumps at rate
\begin{equation*}
\frac{\gq}{2} n_{\bfxi}^A \int_{A} \frac{\widetilde{m}(\bfn-\bfe_{\bfxi}^A+\bfe_{\bar{\bfxi}(x)}^{A})}{\widetilde{m}(\bfn)} \eta(x) \ud x.
\end{equation*}
The resulting state is $\bfn-\bfe_{\bfxi}^A+\bfe_{\bar{\bfxi}(x)}^{A}$, with the position $x \in A$ chosen by the probability distribution proportional to $\frac{\widetilde{m}(\bfn-\bfe_{\bfxi}^A+\bfe_{\bar{\bfxi}(x)}^{A})}{\widetilde{m}(\bfn)} \eta(x)\ud x$.
\end{description}
It remains to reconcile this with the description for mutation given in \sref{sec:continuous}, which follows if we can show that the infinitely-many-sites assumption holds in the limit. More precisely, we should see transitions only to states of the form $\widehat{\bfn} = \bfn-\bfe_{\bfxi}^A+\bfe_{\bfxi \setminus \{x\}}^{A}$ such that $x \in \bfxi$, and such that $x \notin \bfmath{\zeta}$ for any $\bfmath{\zeta}$ and $B$ with $\widehat{n}_{\bfmath{\zeta}}^B > 0$. This holds by the following lemma, from which we deduce that if $\widehat{\bfn}$ is \emph{not} of this form then $\widetilde{m}(\widehat{\bfn}^{(L)})/\widetilde{m}(\bfn^{(L)}) \to 0$ as $L\to\infty$.

\begin{lem}
Let
\[
s(\bfn^{(L)}) = \left|\bigcup_{\bfxi^{(L)} \subseteq A^{(L)}: n_{\bfxi^{(L)}}^{A^{(L)}} > 0} \bfxi^{(L)}\right|
\]
denote the total number of segregating sites in a sample $\bfn^{(L)} \in \Xi_{E,n}^{(L)}$. If $s(\bfn^{(L)}) = O(1)$ then $\widetilde{m}(\bfn^{(L)}) = O(L^{-s(\bfn^{(L)})})$ as $L \to \infty$.
\end{lem}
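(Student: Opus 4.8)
The plan is to use that $\widetilde{m}(\bfn^{(L)})$ is a stationary sampling probability of the $L$-locus coalescent (the interpretation established in \sref{sec:reduced}) and, equivalently, that it obeys the recursion \eqref{eq:SamplingDistributionFiniteSites} of \propref{prop:systemFiniteSites}. The single structural fact driving everything is that, under the two-allele swap embedding, a locus can be polymorphic in the sample only if the sample's marginal genealogy at that locus carries a mutation, and the per-locus mutation parameter $\gq_l$ is obtained by integrating a bounded density over an interval of length $1/L$, so $\gq_l = O(1/L)$ uniformly in $l$. Thus each of the $s=s(\bfn^{(L)})$ segregating sites should contribute exactly one factor $O(1/L)$, while coalescence and recombination contribute only $\Theta(1)$ factors (the total recombination rate $\sum_l \gr_l$ stays $\Theta(1)$ because a breakpoint may fall at any of the $\Theta(L)$ loci in an interval).

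I would run this as an induction on $s$ through \eqref{eq:SamplingDistributionFiniteSites}. The base case $s=0$ is immediate since $\widetilde{m}\le 1$. For the step, classify the terms on the right: coalescence and recombination preserve the segregating set, and the only terms that strictly reduce it by one are the mutation terms that erase the last copy of a derived allele, each carrying the factor $\gq_l=O(1/L)$ multiplying a moment with $s-1$ segregating sites, which is $O(L^{-(s-1)})$ by hypothesis; these contribute $O(L^{-s})$ as required. To make the constants uniform in $L$ I would feed in the following probabilistic estimate: conditioning on the entire (reduced) ancestral recombination graph, the numbers of mutations $N_{l_1},\dots,N_{l_s}$ at the $s$ distinct segregating loci are independent Poisson variables with means $\tfrac{\gq_{l_k}}{2}\ell_{l_k}$, where $\ell_{l_k}$ is the total branch length ancestral at locus $l_k$. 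Since the marginal genealogy at any single locus is an ordinary $n$-coalescent, independent of $L$ and of recombination, each $\ell_{l_k}$ has all moments bounded uniformly in $L$; hence by $\bbP(N_{l_k}\ge 1\ \forall k)\le \bbE[\prod_k N_{l_k}]=\big(\prod_k \tfrac{\gq_{l_k}}{2}\big)\,\bbE[\prod_k \ell_{l_k}]$ and H\"older, the probability of being polymorphic at all $s$ loci --- and therefore $\widetilde{m}(\bfn^{(L)})$ itself --- is $O(L^{-s})$.

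The main obstacle is that \eqref{eq:SamplingDistributionFiniteSites} is not triangular in $s$: the same-$s$ coalescence terms reenter the right-hand side with $\Theta(1)$ rates, and, because the swap matrix permits back-mutation, there are mutation terms linking level $s$ to level $s+1$. I expect the cleanest way to close this is the rescaling $\Phi(\bfn):=L^{s(\bfn)}\widetilde{m}(\bfn)$: substituting $\widetilde{m}=L^{-s}\Phi$ and using $\gq_l,\gr_l=O(1/L)$, every coupling to a higher $s$-level acquires a prefactor $O(L^{-2})$ while the diagonal total rate $n(n-1)+\sum_l\gq_l\sum_{A\ni l}n^A+\sum_l\gr_l\sum_{A:\,\Al\neq\emptyset\neq\Ag}n^A$ stays $\Theta(1)$ (the per-locus rates are small but lineages span $\Theta(L)$ loci). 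One is then left with a system whose $\Theta(1)$ part couples level $s$ only to itself --- through coalescence, which strictly lowers the pair $(\mathrm{degree},\#\text{lineages})$ --- and to the source level $s-1$; the back-mutation corrections are absorbed perturbatively once the crude a priori bound $\widetilde{m}\le 1$ is invoked. Establishing that the resulting resolvent is bounded \emph{uniformly in $L$} is the technical heart, and here the probabilistic estimate above (equivalently, the $L$-independence of the single-locus coalescent) is exactly what prevents the $\Theta(L)$ proliferation of levels from degrading the bound via a naive Gronwall argument. The hypothesis $s(\bfn^{(L)})=O(1)$ ensures the induction has bounded depth, so these uniform estimates suffice.
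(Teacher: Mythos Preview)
Your proposal is correct, and in fact contains a cleaner argument than the one the paper gives. The paper's proof is purely the recursion route you start with: it plugs the ansatz $\widetilde{m}(\bfn)=O(L^{-s(\bfn)})$ into \eqref{eq:SamplingDistributionFiniteSites} and checks consistency term by term, splitting the mutation sum into the three cases $s\to s-1$ (finitely many terms, each $\gq_l\cdot O(L^{-(s-1)})=O(L^{-s})$), $s\to s$ (parallel mutations, $O(L^{-(s+1)})$), and $s\to s+1$ (back mutations, $O(L)$ terms each of size $O(L^{-1})\cdot O(L^{-(s+1)})$). That is exactly the power-counting you sketch, but the paper does not dwell on the non-triangularity you flag; it simply appeals to uniqueness of the solution of the finite system and treats the consistency check as sufficient.

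Your probabilistic estimate is a genuinely different route, and it is self-contained: since $\widetilde{m}(\bfn^{(L)})$ is an ordered sampling probability, it is bounded above by the probability that every one of the $s$ segregating loci receives at least one mutation on its marginal genealogy. Conditioning on the graph makes the mutation counts independent Poisson with means $\tfrac{\gq_{l_k}}{2}\ell_{l_k}$, and because each marginal tree is a Kingman coalescent on at most $n$ leaves regardless of $L$ or the $\gr_l$, H\"older gives $\bbE[\prod_k\ell_{l_k}]\le C_{n,s}$ uniformly. This already yields $\widetilde{m}(\bfn^{(L)})\le C_{n,s}\prod_k\gq_{l_k}=O(L^{-s})$ with no recourse to the recursion at all. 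In other words, the part of your write-up that you present as a ``closure'' for the induction is by itself a complete proof of the lemma; the rescaling $\Phi=L^{s}\widetilde{m}$ and the resolvent discussion are unnecessary overhead. What your approach buys is a rigorous, non-circular upper bound that avoids the delicate point of justifying an order ansatz in a system that couples level $s$ to level $s+1$; what the paper's approach buys is brevity and a direct connection to the recursion that drives the rest of the appendix.
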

\begin{proof}
$\widetilde{m}(\bfn^{(L)})$ satisfies the finite system \eqref{eq:SamplingDistributionFiniteSites}, whose solution is unique. (The boundary condition is adjusted to account for our definition of $\bfxi$ with respect to a reference haplotype: $\widetilde{m}(\bfe_{\bfxi^{(L)}}) = \gd_{\bfxi^{(L)}\emptyset}$.) It is straightforward to check that $\widetilde{m}(\bfn^{(L)}) = O(L^{-s(\bfn^{(L)})})$ satisfies this system: The left-hand side, and the first and third terms on the right are all clearly $O(L^{-s(\bfn^{(L)})})$. The second term on the right, corresponding to mutation events, has three contributions: First, there are $O(1)$ summands for which (in the notation of this section) $\widehat{\bfn} = \bfn-\bfe_{\bfxi^{(L)}}^{A^{(L)}} + \bfe_{\bar{\bfxi}^{(L)}(\frac{l}{L})}^{A^{(L)}}$ has one fewer segregating site; these terms contribute $\gq_l \times \widetilde{m}(\widehat{\bfn}) = O(L^{-1} \times L^{-(s(\bfn^{(L)})-1)}) = O(L^{-s(\bfn^{(L)})})$. Second, there are $O(1)$ summands for which $\widehat{\bfn}$ has the same number of segregating sites (parallel mutations); these terms contribute $\gq_l \times \widetilde{m}(\widehat{\bfn}) = O(L^{-1} \times L^{-s(\bfn^{(L)})}) = O(L^{-(s(\bfn^{(L)})+1)})$ and vanish in the limit. Third, there are $O(L)$ summands for which $\widehat{\bfn}$ has one extra segregating site (back mutations); these terms each contribute $O(L^{-1} \times L^{-s(\bfn^{(L)})+1})$ and also vanish in the limit.
\end{proof}

Thus, back mutations are not seen in the limit because the integrand in \eqref{eq:mutation} vanishes, while parallel mutation are not seen because the integrand is $O(1)$ but the range of integration for such events has Lebesgue measure zero. The integral \eqref{eq:mutation} is recognised retrospectively as a sum over at most $|\bfxi|$ atoms. 


\begin{thebibliography}{43}
\expandafter\ifx\csname natexlab\endcsname\relax\def\natexlab#1{#1}\fi
\expandafter\ifx\csname url\endcsname\relax
  \def\url#1{\texttt{#1}}\fi
\expandafter\ifx\csname urlprefix\endcsname\relax\def\urlprefix{URL }\fi

\bibitem[{Barbour et~al.(2000)Barbour, Ethier, and Griffiths}]{bar:etal:2000}
Barbour, A.~D., Ethier, S.~N., Griffiths, R.~C., 2000. A transition function
  expansion for a diffusion model with selection. Annals of Applied Probability
  10~(1), 123--162.

\bibitem[{Bobrowski et~al.(2010)Bobrowski, Wojdy\l{}a, and
  Kimmel}]{bob:etal:2010}
Bobrowski, A., Wojdy\l{}a, T., Kimmel, M., 2010. Asymptotic behavior of a
  {Moran} model with mutations, drift and recombination among multiple loci.
  Journal of Mathematical Biology 61, 455--473.

\bibitem[{Donnelly and Kurtz(1999)}]{don:kur:1999:AAP}
Donnelly, P., Kurtz, T.~G., 1999. Genealogical processes for {Fleming-Viot}
  models with selection and recombination. Annals of Applied Probability 9~(4),
  1091--1148.

\bibitem[{Donnelly and Tavar\'e(1987)}]{don:tav:1987}
Donnelly, P., Tavar\'e, S., 1987. The population genealogy of the
  infinitely-many neutral alleles model. Journal of Mathematical Biology 25,
  381--391.

\bibitem[{Esser et~al.(2016)Esser, Probst, and Baake}]{ess:etal:2016}
Esser, M., Probst, S., Baake, E., 2016. Partitioning, duality, and linkage
  disequilibria in the {Moran} model with recombination. Journal of
  Mathematical Biology 73~(1), 161--197.

\bibitem[{Etheridge and Griffiths(2009)}]{eth:gri:2009}
Etheridge, A.~M., Griffiths, R.~C., 2009. A coalescent dual process in a
  {Moran} model with genic selection. Theoretical Population Biology 75,
  320--330.

\bibitem[{Etheridge et~al.(2010)Etheridge, Griffiths, and
  Taylor}]{eth:etal:2010}
Etheridge, A.~M., Griffiths, R.~C., Taylor, J.~E., 2010. A coalescent dual
  process in a {Moran} model with genic selection, and the lambda coalescent
  limit. Theoretical Population Biology 78, 77--92.

\bibitem[{Ethier and Griffiths(1987)}]{eth:gri:1987}
Ethier, S.~N., Griffiths, R.~C., 1987. The infinitely-many-sites model as a
  measure-valued diffusion. The Annals of Probability 15~(2), 515--545.

\bibitem[{Ethier and Griffiths(1990{\natexlab{a}})}]{eth:gri:1990:AAP}
Ethier, S.~N., Griffiths, R.~C., 1990{\natexlab{a}}. The neutral two-locus
  model as a measure-valued diffusion. Advances in Applied Probability 22~(4),
  773--786.

\bibitem[{Ethier and Griffiths(1990{\natexlab{b}})}]{eth:gri:1990:JMB}
Ethier, S.~N., Griffiths, R.~C., 1990{\natexlab{b}}. On the two-locus sampling
  distribution. Journal of Mathematical Biology 29, 131--159.

\bibitem[{Ethier and Griffiths(1993)}]{eth:gri:1993}
Ethier, S.~N., Griffiths, R.~C., 1993. The transition function of a
  {Fleming-Viot} process. Annals of Probability 21~(3), 1571--1590.

\bibitem[{Ethier and Kurtz(1993)}]{eth:kur:1993}
Ethier, S.~N., Kurtz, T.~G., 1993. {Fleming-Viot} processes in population
  genetics. SIAM Journal of Control and Optimization 31~(2), 345--386.

\bibitem[{Fearnhead(2002)}]{fea:2002}
Fearnhead, P., 2002. The common ancestor at a nonneutral locus. Journal of
  Applied Probability 39, 38--54.

\bibitem[{Fearnhead(2003)}]{fea:2003:JAP}
Fearnhead, P., 2003. Haplotypes: the joint distribution of alleles at linked
  loci. Journal of Applied Probability 40, 505--512.

\bibitem[{Fearnhead and Donnelly(2001)}]{fea:don:2001}
Fearnhead, P., Donnelly, P., 2001. Estimating recombination rates from
  population genetic data. Genetics 159, 1299--1318.

\bibitem[{Golding(1984)}]{gol:1984}
Golding, G.~B., 1984. The sampling distribution of linkage disequilibrium.
  Genetics 108, 257--274.

\bibitem[{Griffiths(1979)}]{gri:1979:AAP11:310}
Griffiths, R.~C., 1979. A transition density expansion for a multi-allele
  diffusion model. Advances in Applied Probability 11~(2), 310--325.

\bibitem[{Griffiths(1980)}]{gri:1980}
Griffiths, R.~C., 1980. Lines of descent in the diffusion approximation of
  neutral {Wright-Fisher} models. Theoretical Population Biology 17, 37--50.

\bibitem[{Griffiths(1981)}]{gri:1981}
Griffiths, R.~C., 1981. Neutral two-locus multiple allele models with
  recombination. Theoretical Population Biology 19, 169--186.

\bibitem[{Griffiths(1991)}]{gri:1991}
Griffiths, R.~C., 1991. The two-locus ancestral graph. In: Basawa, I.~V.,
  Taylor, R.~L. (Eds.), Selected proceedings of the Sheffield symposium on
  applied probability: 18. IMS Lecture Notes---Monograph series. Vol.~18. pp.
  100--117.

\bibitem[{Griffiths et~al.(2008)Griffiths, Jenkins, and Song}]{gri:etal:2008}
Griffiths, R.~C., Jenkins, P.~A., Song, Y.~S., 2008. Importance sampling and
  the two-locus model with subdivided population structure. Advances in Applied
  Probability 40~(2), 473--500.

\bibitem[{Griffiths and Marjoram(1996)}]{gri:mar:1996}
Griffiths, R.~C., Marjoram, P., 1996. Ancestral inference from samples of {DNA}
  sequences with recombination. Journal of Computational Biology 3~(4),
  479--502.

\bibitem[{Griffiths and Marjoram(1997)}]{gri:mar:1997}
Griffiths, R.~C., Marjoram, P., 1997. An ancestral recombination graph. In:
  Donnelly, P., Tavar\'e, S. (Eds.), Progress in population genetics and human
  evolution. Vol.~87. Springer-Verlag Berlin, pp. 257--270.

\bibitem[{Handa(2002)}]{han:2002}
Handa, K., 2002. Quasi-invariance and reversibility in the {Fleming-Viot}
  process. Probability Theory and Related Fields 122, 545--566.

\bibitem[{Hudson(1983)}]{hud:1983}
Hudson, R.~R., 1983. Properties of a neutral allele model with intragenic
  recombination. Theoretical Population Biology 23, 183--201.

\bibitem[{Jansen and Kurt(2014)}]{jan:kur:2014}
Jansen, S., Kurt, N., 2014. On the notion(s) of duality for {Markov} processes.
  Probability Surveys 11, 59--120.

\bibitem[{Jenkins and Griffiths(2011)}]{jen:gri:2011}
Jenkins, P.~A., Griffiths, R.~C., 2011. Inference from samples of {DNA}
  sequences using a two-locus model. Journal of Computational Biology 18~(1),
  109--127.

\bibitem[{Jenkins and Song(2009)}]{jen:son:2009:G}
Jenkins, P.~A., Song, Y.~S., 2009. Closed-form two-locus sampling
  distributions: accuracy and universality. Genetics 183, 1087--1103.

\bibitem[{Kamm et~al.(2016)Kamm, Spence, Chan, and Song}]{kam:etal:2016}
Kamm, J.~A., Spence, J.~P., Chan, J., Song, Y.~S., 2016. Two-locus likelihoods
  under variable population size and fine-scale recombination rate estimation.
  Genetics 203~(3), 1381--1399.

\bibitem[{Kingman(1982)}]{kin:1982:SPA}
Kingman, J. F.~C., 1982. The coalescent. Stochastic Processes and their
  Applications 13~(3), 235--248.

\bibitem[{Krone and Neuhauser(1997)}]{kro:neu:1997}
Krone, S.~M., Neuhauser, C., 1997. Ancestral processes with selection.
  Theoretical Population Biology 51~(3), 210--237.

\bibitem[{Larribe and Lessard(2008)}]{lar:les:2008}
Larribe, F., Lessard, S., 2008. A composite-conditional-likelihood approach for
  gene mapping based on linkage disequilibrium in windows of marker loci.
  Statistical Applications in Genetics and Molecular Biology 7~(1), Article 27.

\bibitem[{Larribe et~al.(2002)Larribe, Lessard, and Schork}]{lar:etal:2002}
Larribe, F., Lessard, S., Schork, N.~J., 2002. Gene mapping via the ancestral
  recombination graph. Theoretical Population Biology 62, 215--229.

\bibitem[{Lohse et~al.(2016)Lohse, Chmelik, Martin, and Barton}]{loh:etal:2016}
Lohse, K., Chmelik, M., Martin, S.~H., Barton, N.~H., 2016. Efficient
  strategies for calculating blockwise likelihoods under the coalescent.
  Genetics 202~(2), 775--786.

\bibitem[{Lohse et~al.(2011)Lohse, Harrison, and Barton}]{loh:etal:2011}
Lohse, K., Harrison, R.~J., Barton, N.~H., 2011. A general method for
  calculating likelihoods under the coalescent process. Genetics 189, 977--987.

\bibitem[{Mano(2013)}]{man:2013}
Mano, S., 2013. Duality between the two-locus {Wright-Fisher} diffusion model
  and the ancestral process with recombination. Journal of Applied Probability
  50, 256--271.

\bibitem[{Neuhauser and Krone(1997)}]{neu:kro:1997}
Neuhauser, C., Krone, S.~M., 1997. The genealogy of samples in models with
  selection. Genetics 145, 519--534.

\bibitem[{{OEIS Foundation Inc.}(2011)}]{oeis:2011}
{OEIS Foundation Inc.}, 2011. The on-line encyclopedia of integer sequences.
\newline\urlprefix\url{http://oeis.org}

\bibitem[{Simonsen and Churchill(1997)}]{sim:chu:1997}
Simonsen, K.~L., Churchill, G.~A., 1997. A {Markov} chain model of coalescence
  with recombination. Theoretical Population Biology 52, 43--59.

\bibitem[{Stephens(2007)}]{ste:2007}
Stephens, M., 2007. Inference under the coalescent. In: Balding, D., Bishop,
  M., Cannings, C. (Eds.), Handbook of Statistical Genetics. Wiley, Chichester,
  UK, Ch.~26, pp. 878--908.

\bibitem[{Stephens and Donnelly(2003)}]{ste:don:2003}
Stephens, M., Donnelly, P., 2003. Ancestral inference in population genetics
  models with selection. Australia and New Zealand Journal of Statistics
  45~(3), 395--430.

\bibitem[{Wiuf and Hein(1997)}]{wiu:hei:1997}
Wiuf, C., Hein, J., 1997. On the number of ancestors to a {DNA} sequence.
  Genetics 147, 1459--1468.

\bibitem[{Wright(1949)}]{wri:1949}
Wright, S., 1949. Adaptation and selection. In: Jepson, G.~L., Mayr, E.,
  Simpson, G.~G. (Eds.), Genetics, Paleontology and Evolution. Princeton
  University Press, Princeton, pp. 365--389.

\end{thebibliography}


\end{document}